\documentclass[11pt]{article}

\usepackage[T1]{fontenc}
\usepackage{lmodern}
\usepackage[margin=1in]{geometry}
\usepackage{amsmath,amssymb,amsthm,mathtools,mathrsfs}
\usepackage{bm}
\usepackage{enumitem}
\usepackage{microtype}
\usepackage{xcolor}
\usepackage{hyperref}

\hypersetup{
  colorlinks=true,
  linkcolor=blue!50!black,
  citecolor=blue!50!black,
  urlcolor=blue!50!black
}

\newtheorem{theorem}{Theorem}[section]
\newtheorem{proposition}[theorem]{Proposition}
\newtheorem{lemma}[theorem]{Lemma}
\newtheorem{corollary}[theorem]{Corollary}
\theoremstyle{definition}
\newtheorem{definition}[theorem]{Definition}

\DeclareMathOperator{\Herm}{Herm}

\newcommand{\C}{\mathbb C}

\newcommand{\cO}{\mathcal O}
\newcommand{\cA}{\mathcal A}
\newcommand{\cF}{\mathcal F}

\newcommand{\norm}[1]{\left\|{#1}\right\|_2}

\newcommand{\vx}{{\bm x}}

\title{\bfseries Phase Retrieval in $\mathbb C^4$ Requires Exactly \\ Eleven Measurements}
\author{Meng Huang\thanks{School of Mathematical Sciences, Beihang University, Beijing, China. Corresponding author. Email: \texttt{menghuang@buaa.edu.cn}}
\and Jihun Kim\thanks{Independent Researcher, Seoul, Republic of Korea. Email: \texttt{jihunkimkw@gmail.com}}}

\date{}

\begin{document}

\maketitle

\begin{abstract}
Determining the minimal number of intensity measurements required for phase retrieval in $\mathbb{C}^4$ has been a long-standing open problem. Prior to this work, the best-known results implied that this minimum was either $10$ or $11$. In this paper, we leverage characteristic classes and cohomology groups from differential topology to prove that no family of $10$ vectors in $\mathbb{C}^4$ possesses the phase retrieval property. Combining our lower bound with Vinzant's explicit eleven-vector construction establishes that the exact minimum is $11$. Our result yields a consequence for pure state quantum tomography, namely, a rank-one POVM on $\mathbb{C}^4$ requires exactly $11$ elements to be informationally complete for pure states. This further implies that three orthonormal bases are insufficient to uniquely distinguish all pure states in $\mathbb{C}^4$. Since four orthonormal bases are already known to be sufficient, we conclude that exactly four bases are required, thereby completely resolving the problem left in \cite{Carmeli15}.

\vspace{1em}
\noindent\textbf{Keywords:} phase retrieval; quantum tomography;  complex projective space;  normal bundle; Pontryagin classes.

%\vspace{0.5em}
%\noindent \textbf{2020 MSC:} 94A12, 42C15 (Primary); 57R20, 81P16 (Secondary).
\end{abstract}

\section{Introduction}

\subsection{The phase retrieval problem}
 Given a matrix $\bm{A}=(\bm{a}_1,\ldots,\bm{a}_N)\in\mathbb{C}^{M\times N}$, where $\bm{a}_j \in \mathbb{C}^M$ denote its columns, we define the associated intensity map as
\[
    \cF_{\bm{A}}:\mathbb{C}^M\longrightarrow\mathbb{R}^N,
    \qquad
    \cF_{\bm{A}}(\bm{x})
    =
    \bigl(\lvert \bm{a}_1^*\bm{x}\rvert^2,\ldots,\lvert \bm{a}_N^*\bm{x}\rvert^2\bigr),
\]
where $\bm{a}_j^*$ is the conjugate transpose of $\bm{a}_j$. The phase retrieval problem consists of recovering the unknown signal $\bm{x} \in \C^M$ from its intensity measurements $\cF_{\bm{A}}(\bm{x})$. The known vectors $\bm{a}_j$, $j=1,\ldots, N$,  are referred to as measurement vectors. Note that the intensity map is invariant under global phase:
\[
    \cF_{\bm{A}}(\omega\bm{x})=\cF_{\bm{A}}(\bm{x})
    \qquad \mbox{for all} \qquad  \omega\in\mathbb{T},
\]
where $\mathbb{T}=\{\omega\in\mathbb{C}:\lvert\omega\rvert=1\}$ denotes the complex unit circle. Therefore, the reconstruction question is whether $\cF_{\bm A}$ is injective modulo the global phase action of $\mathbb{T}$.

\begin{definition}\label{def:phase-retrieval}
A matrix $\bm{A} \in \mathbb{C}^{M\times N}$ is said to possess the phase retrieval property or is phase retrievable in $\mathbb{C}^M$ if
\[
    \cF_{\bm A}(\bm x)=\cF_{\bm A}(\bm y)
    \quad\Longrightarrow\quad
    \bm y=\omega\bm x
    \text{ for some }\omega\in\mathbb{T}.
\]
\end{definition}
The phase retrieval problem has a long history in harmonic analysis, algebraic geometry, and quantum tomography; see, for example, \cite{BCE06,BCMN14,CEHV15,F04,Heinosaari13,FSC05}. Furthermore, it is ubiquitous in many areas of the physical sciences and engineering, such as X-ray crystallography \cite{harrison1993phase,millane1990phase}, diffraction imaging \cite{shechtman2015phase,chai2010array}, microscopy \cite{miao2008extending}, astronomy \cite{fienup1987phase}, and optics and acoustics \cite{walther1963question}, where optical sensors and detectors are incapable of recording phase information.

A central question in phase retrieval asks for the minimal number of measurements $N$ such that there exists a matrix $\bm{A} \in \mathbb{C}^{M\times N}$ possessing the phase retrieval property in $\mathbb{C}^M$. In this paper, we focus on the case where $M=4$ and investigate the following problem:
 
\textbf{Problem 1:} \textit{Does there exist a family of ten vectors possessing the phase retrieval property in $\mathbb{C}^4$?}

 \subsection{Motivation and related work}
The motivation for studying this problem stems from the following three perspectives:

In the context of phase retrieval, a fundamental problem is determining the minimal measurement number $N$ for which the map $\cF_{\bm{A}}$ is injective on $\mathbb{C}^M$ up to a global phase. For this purpose, we use the notation $N_{\min}(\mathbb{C}^M)$ to denote the minimal $N$ for which the phase retrieval property holds, namely, 
\[
    N_{\min}(\mathbb{C}^M) := \min \left\{ N : \mbox{there exists a matrix $\bm{A} \in \mathbb{C}^{M\times N}$ such that $\cF_{\bm{A}}$ is injective on $\mathbb{C}^M/\mathbb{T}$} \right\}.
\]
Balan, Casazza, and Edidin \cite{BCE06} showed that $\cF_{\bm{A}}$ is injective on $\mathbb{C}^M/\mathbb{T}$ if $N \ge 4M-2$ and $\bm{a}_1,\ldots,\bm{a}_N$ are generic vectors in $\mathbb{C}^M$. This implies $N_{\min}(\mathbb{C}^M) \le 4M-2$. Later, Bandeira, Cahill, Mixon, and Nelson \cite{BCMN14} conjectured that $N_{\min}(\mathbb{C}^M) = 4M-4$. More precisely, they conjectured that (a) if $N < 4M - 4$, then $\cF_{\bm{A}}$ is not injective on $\mathbb{C}^M/\mathbb{T}$; and (b) if $N \geq 4M - 4$, then $\cF_{\bm{A}}$ is injective on $\mathbb{C}^M/\mathbb{T}$ for generic vectors $\bm{a}_j$, $j = 1, \dots, N$. Using tools from algebraic geometry, Conca, Edidin, Hering, and Vinzant \cite{CEHV15} proved that part (b) of the conjecture is valid, and part (a) is also true if $M = 2^k + 1$, $k \in \mathbb{N}_0$. This implies that $N_{\min}(\mathbb{C}^M) \le 4M-4$ for all $M$, and $N_{\min}(\mathbb{C}^M) = 4M-4$ if $M = 2^k + 1$. In particular, the exact minimum is known in dimensions $M=2,3,5,9,\ldots$.

On the other hand, Vinzant  \cite{Vinzant15} constructed an injective frame consisting of only $11 = 4M-5 < 4M-4$ vectors in $\mathbb{C}^4$, thereby disproving part (a) of this conjecture. This indicates that $4M-4$ is not minimal for certain dimensions. Topological nonembedding results of Heinosaari, Mazzarella, and Wolf \cite{Heinosaari13} give lower bounds for pure-state tomography and, after Parseval normalization, for standard phase retrieval. For $M>4$, Wang and Xu \cite{wangxu} proved a lower bound for generalized complex phase retrieval which, in particular, implies
\[
    N \ge 4M-2-2\alpha
\]
for standard phase retrieval, where $\alpha$ denotes the number of $1$'s in the binary expansion of $M-1$. For $M=4$, the lower bound $N_{\min}(\mathbb{C}^4) \ge 10$ follows from \cite{Heinosaari13} together with Parseval normalization. Combining this with Vinzant's eleven-vector construction gives
\[
    10 \le N_{\min}(\mathbb{C}^4) \le 11.
\]
Consequently, the exact minimum number of measurements in $\mathbb{C}^4$ has remained unresolved. This directly motivates our central question: does there exist a phase retrievable family of $10$ vectors in $\mathbb{C}^4$?

Our problem is also closely related to pure state quantum tomography, which is generally described by positive operator-valued measures (POVMs) \cite{Heinosaari13,WangShang2018}. More precisely, a pure state on $\mathbb{C}^{M}$ to be recovered is represented by a rank-one density operator $\bm{\rho}_{\bm{x}} = \bm{x}\bm{x}^*$ with $\norm{\vx} = 1$, and the corresponding probability distribution of the measurement outcomes is given by
\[
    \rho^{\bm E}(j) = \operatorname{tr} \bigl(\bm{\rho}_{\bm{x}} \bm{E}_j \bigr),  \quad j=1, \ldots, N,
\]
where $\bm{E}_j \in \mathbb{C}^{M\times M}$ are positive semidefinite matrices satisfying $\sum_{j=1}^N \bm{E}_j = I_M$. The POVM is called pure state informationally complete, if the resulting probability map is injective on the set of pure states. If every POVM element has rank one, we may write \(  \bm{E}_{j}  =
    \bm{\phi}_{j}\bm{\phi}_{j}^{*} \). 
In this case,
\[
     \rho^{\bm E}(j) 
    =
    \operatorname{Tr}
    \bigl(
        \bm{x}\bm{x}^{*}
        \bm{\phi}_{j}\bm{\phi}_{j}^{*}
    \bigr)
    =
    \left|
        \left\langle\bm{\phi}_{j},\bm{x}\right\rangle
    \right|^{2}.
\]
Thus, injectivity of a rank-one POVM on pure states is precisely the corresponding phase retrieval property.
%Hence, the problem of determining an unknown pure state from rank-one measurements $\bm{A}_j = \bm{\phi}_j\bm{\phi}_j^*$ is a special case of phase retrieval in which $\sum_{j=1}^N \bm{\phi}_j\bm{\phi}_j^* = I_M$. 
In $\C^4$, Heinosaari, Mazzarella, and Wolf \cite{Heinosaari13} give an eleven-outcome POVM that distinguishes all pure states, equivalently a construction using ten self-adjoint observables. Together with their lower bound, this leaves the minimum number of outcomes of a general pure-state informationally complete POVM at either ten or eleven. The construction is not restricted to rank-one POVM elements.
Consequently, these results do not settle whether ten rank-one measurements can distinguish all pure states. This distinction between general POVMs and rank-one POVMs is essential for the problem
considered here.

Furthermore, this question is tied to the problem where measurements are the rows of several  orthonormal bases. The number of orthonormal bases required to distinguish all pure quantum states is a well-studied topic in the quantum tomography literature \cite{Carmeli16,Carmeli15,SunLL}. Carmeli, Heinosaari, Schultz, and Toigo  showed that three bases suffice in dimension $2$, whereas four bases are necessary and sufficient in dimension $3$ and in all dimensions $M \geq 5$ \cite{Carmeli15}. For dimension $M=4$, the required number of bases is either three or four \cite{Carmeli15}, a question that has remained unresolved. Since distinguishing pure states with three orthonormal bases would mathematically imply the existence of a phase retrieval family consisting of at most ten vectors, a negative answer to Problem 1 would rule out the sufficiency of three orthonormal bases in $\mathbb{C}^4$. This would establish that exactly four orthonormal bases are required to distinguish all pure quantum states in $\mathbb{C}^4$, providing the third motivation for our interest in Problem 1.

Finally, we note a recent probabilistic result regarding the injectivity of phase retrieval with $N = 4M-5$ measurements. Specifically, Li \cite{Li2026} proved that for an $N \times M$ matrix with independent standard complex Gaussian entries, if $N = 4M-5$, the associated phase retrieval map is non-injective with positive probability. For dimension $M=4$, this corresponds to $N=11$ measurements. However, this probabilistic result neither asserts that every eleven-vector frame fails, nor does it settle the ten-vector question. 

For a broader context, a more in-depth account of the history of necessary and sufficient bounds for phase retrieval in $\mathbb{C}^M$ can be found in \cite{BCNT,GKR}. Furthermore, Bodmann and Hammen \cite{BH1,BH2} developed concrete algorithms and established error bounds for phase retrieval using frames with low redundancy. Additional phase retrieval algorithms under Gaussian random measurements with provable performance guarantees are discussed extensively in \cite{waldspurger2018phase,turstregion,cai2021solving,TTCai,ChenFan,Duchi,tan2019phase,huangsiam}. For a comprehensive overview of recent theoretical, algorithmic, and applied developments in phase retrieval, we refer the reader to the survey paper \cite{GKR}.

\subsection{Our contributions}
As stated previously, determining the minimal measurement number $N$ for which the map $\cF_{\bm{A}}$ is injective on $\mathbb{C}^4$ up to a global phase has remained an open problem. The best-known prior results could only constrain this value to $N_{\min}(\mathbb{C}^4) = 10$ or $N_{\min}(\mathbb{C}^4) = 11$. In this paper, we provide a negative answer to Problem 1, proving that no family of $10$ vectors in $\mathbb{C}^4$ possesses the phase retrieval property. Combining this lower bound with Vinzant's $11$-vector construction yields the exact value $N_{\min}(\mathbb{C}^4) = 11$. 

The proof is topological but crucially relies on information specific to rank-one intensity measurements. We show that if a ten-vector frame possesses the phase retrieval property, its associated intensity map naturally induces a smooth embedding
\[
    \mathbb{CP}^{3}\hookrightarrow\mathbb{R}^{9}.
\]
For every nonzero measurement vector, the zero locus of the associated intensity coordinate is a projective hyperplane $\mathbb{CP}^{2}\subset\mathbb{CP}^{3}$. The rank-one structure of that coordinate allows us to construct a nowhere-zero smooth section of the normal bundle over this hyperplane. However, the resulting geometric splitting of the restricted normal bundle is rigorously incompatible with its first Pontryagin class. This topological obstruction establishes our result by contradiction.

In quantum-tomographic language, our result demonstrates that three orthonormal bases cannot distinguish all pure states in $\mathbb{C}^{4}$. Since four orthonormal bases are already known to be sufficient, the minimum number of orthonormal bases required to distinguish every pair of pure states in $\mathbb{C}^{4}$ is exactly four.

Our main result is stated as follows:

\begin{theorem}\label{thm:main}
No family of ten vectors in $\mathbb{C}^4$ possesses the phase retrieval property.
\end{theorem}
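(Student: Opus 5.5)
We argue by contradiction: from a hypothetical ten-vector phase retrievable family we build a smooth embedding $\mathbb{CP}^3\hookrightarrow\mathbb{R}^9$, and then rule it out with a Pontryagin class computation on a hyperplane $\mathbb{CP}^2$.

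\emph{Step 1 (normalization).} Suppose $\bm A=(\bm a_1,\ldots,\bm a_{10})\in\C^{4\times 10}$ is phase retrievable. The $\bm a_j$ must span $\C^4$. Otherwise some $\bm v\neq 0$ is orthogonal to all of them, and then $\bm x$ and $\bm x+\bm v$ have equal intensities without being phase-equivalent for suitable $\bm x$. Hence the frame operator $\bm S=\sum_j \bm a_j\bm a_j^*$ is positive definite. Since $|(\bm S^{-1/2}\bm a_j)^*\bm x|^2=|\bm a_j^*(\bm S^{-1/2}\bm x)|^2$ and $\bm S^{-1/2}$ is invertible, the vectors $\bm S^{-1/2}\bm a_j$ are again phase retrievable. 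So we may assume $\sum_j\bm a_j\bm a_j^*=I_4$, which gives $\sum_j|\bm a_j^*\bm x|^2=\norm{\vx}^2$.

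\emph{Step 2 (the embedding).} Consider the map
\[
f:\mathbb{CP}^3\to\mathbb{R}^{10},\qquad [\bm x]\mapsto \cF_{\bm A}(\bm x)\ \text{ with }\norm{\vx}=1 .
\]
It is well defined and smooth, because it factors through $[\bm x]\mapsto \bm x\bm x^*$ followed by the linear maps $\bm H\mapsto \bm a_j^*\bm H\bm a_j$. By Step 1 its image lies in the affine hyperplane $\{\sum_j y_j=1\}\cong\mathbb{R}^9$, and $f$ is injective by assumption.

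We use the standard reformulation: injectivity is equivalent to the nonexistence of a nonzero Hermitian $\bm H$ of rank at most $2$ with $\bm a_j^*\bm H\bm a_j=0$ for all $j$. (Indeed $\bm x\bm x^*-\bm y\bm y^*$ is such a matrix whenever $\cF_{\bm A}(\bm x)=\cF_{\bm A}(\bm y)$.) The differential of $[\bm x]\mapsto\bm x\bm x^*$ at a tangent vector $\bm v\perp\bm x$, $\bm v\neq 0$, is $\bm x\bm v^*+\bm v\bm x^*$. This is a nonzero Hermitian matrix of rank $2$, so it cannot be annihilated by every measurement, and $f$ is an immersion. An injective immersion of a compact manifold is an embedding, so $f$ embeds $\mathbb{CP}^3$ into $\mathbb{R}^9$ with normal bundle $\nu$ of real rank $3$. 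Since $\mathbb{CP}^3$ is simply connected, $\nu$ is orientable.

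\emph{Step 3 (a section over a hyperplane).} Fix $j$ with $\bm a_j\neq 0$ and let $P=\{[\bm x]:\bm a_j^*\bm x=0\}\cong\mathbb{CP}^2$. The coordinate $f_j=|\bm a_j^*\bm x|^2\ge 0$ attains its minimum on $P$, so $df_j=0$ there. Consequently every tangent vector of $f(\mathbb{CP}^3)$ at points of $f(P)$ has vanishing $j$-th component.

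Now take the constant vector $\bm w=e_j-\tfrac1{10}(1,\ldots,1)$. It lies in the direction space of the hyperplane $\{\sum_j y_j=1\}$ and has $j$-th component $9/10\neq0$. Therefore $\bm w$ is never tangent to the image along $P$, and its orthogonal projection to $\nu$ is a nowhere-zero section of $\nu|_P$. This gives a splitting $\nu|_P\cong\varepsilon^1\oplus E$ with $E$ an oriented rank-$2$ bundle.

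\emph{Step 4 (the contradiction).} Let $h\in H^2(\mathbb{CP}^3;\mathbb Z)$ be the hyperplane class. Then $p(T\mathbb{CP}^3)=(1+h^2)^4$, and $p(T)p(\nu)=1$ holds modulo $2$-torsion, which vanishes here. Hence $p_1(\nu)=-4h^2$, and its restriction gives $p_1(\nu|_P)=-4h^2$ with $h^2$ generating $H^4(\mathbb{CP}^2)\cong\mathbb Z$.

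On the other hand,
\[
p_1(\nu|_P)=p_1(E)=e(E)^2=(kh)^2=k^2h^2,\qquad k\in\mathbb Z .
\]
This would force $k^2=-4$, which is impossible. So no ten-vector family is phase retrievable.

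The main point needing care is Step 2, namely that phase retrievability yields an immersion and not merely an injective map. A secondary check is Step 4, where we must confirm that torsion and sign conventions do not spoil the identity $p_1(\nu)=-4h^2$. Both are fine, since the integral cohomology of $\mathbb{CP}^n$ is torsion-free.
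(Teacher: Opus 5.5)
Your proposal is correct and follows the paper's proof essentially step for step. The steps match: Parseval normalization; the intensity map as an embedding $\mathbb{CP}^3\hookrightarrow\mathbb{R}^9$, with the immersion coming from the rank-$\le 2$ kernel criterion; the constant vector $\bm e_j-\tfrac{1}{10}\bm 1$ giving a nowhere-zero normal section over the hyperplane $\{\bm a_j^*\bm x=0\}\cong\mathbb{CP}^2$, because $df_j$ vanishes there; and the contradiction $k^2=-4$ from comparing $p_1(\nu)=-4h^2$ with $p_1(E)=e(E)^2$.

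The only slip is in Step 2: your parenthetical proves the converse of the direction you actually use for the immersion claim. The needed direction is immediate, though, since $\bm x\bm v^*+\bm v\bm x^*=\tfrac12\bigl[(\bm x+\bm v)(\bm x+\bm v)^*-(\bm x-\bm v)(\bm x-\bm v)^*\bigr]$. So whenever $df$ kills $\bm v$, the vectors $\bm x+\bm v$ and $\bm x-\bm v$ have equal intensities but are not phase-equivalent.
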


Together with Vinzant's eleven-vector construction, Theorem~\ref{thm:main} implies the following result.

\begin{corollary}\label{cor:exact}
The minimum number of measurements required for phase retrieval in $\mathbb{C}^4$ is exactly $11$.
\end{corollary}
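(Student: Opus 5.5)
Suppose, for contradiction, that $\bm A=(\bm a_1,\ldots,\bm a_{10})$ is phase retrievable in $\mathbb C^4$. We may assume no $\bm a_j$ is zero: otherwise fewer than ten nonzero vectors would be phase retrievable, and we can pad with generic vectors. Restricting $\cF_{\bm A}$ to the unit sphere $S^7$ gives a $\mathbb T$-invariant map that descends to an injective map
\[
\mathbb{CP}^3 \to \{y\in\mathbb R^{10}: y\text{ lies in a fixed affine hyperplane}\}\cong\mathbb R^9.
\]
To get the hyperplane we renormalize. Phase retrievability forces $\bm a_1,\ldots,\bm a_{10}$ to span $\mathbb C^4$, so $S=\sum_j \bm a_j\bm a_j^*$ is positive definite. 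Replacing $\bm a_j$ by $S^{-1/2}\bm a_j$ preserves injectivity, because it composes $\cF_{\bm A}$ with a linear change of $\bm x$. After this Parseval normalization, $\sum_j|\bm a_j^*\bm x|^2=\|\bm x\|_2^2=1$ on the sphere, which is the required hyperplane.

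The induced map $f\colon\mathbb{CP}^3\to\mathbb R^9$ is the restriction of the linear map $\bm x\bm x^*\mapsto(\operatorname{tr}(\bm x\bm x^*\bm a_j\bm a_j^*))_j$ on $\Herm(4)$. We next show it is an immersion. The standard argument is that phase retrievability implies no nonzero matrix of the form $\bm u\bm v^*+\bm v\bm u^*$ with rank $\le2$ lies in the kernel, and tangent vectors to the rank-one locus are exactly such matrices. An injective immersion of a compact manifold is an embedding, so $f$ embeds $\mathbb{CP}^3$ in $\mathbb R^9$ with a rank-$3$ normal bundle $\nu$. In $H^*(\mathbb{CP}^3)$ we have $p(T)=(1+h^2)^4=1+4h^2$, so $p(\nu)=(1+4h^2)^{-1}=1-4h^2$ and $p_1(\nu)=-4h^2$.

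Now fix $j$ and set $H=\{[\bm x]:\bm a_j^*\bm x=0\}\cong\mathbb{CP}^2$. Because of the normalization, $f$ satisfies $0\le f_j\le1$, with $f_j=0$ exactly on $H$. Let $g=f_j=|\bm a_j^*\bm x|^2$ and let $e_j$ be the $j$-th coordinate vector projected into the hyperplane. The key claim is that along $H$ the projection $s$ of $e_j$ onto $\nu$ is nowhere zero. Since $g$ attains its minimum on $H$, $dg=0$ there, which means $e_j$ is orthogonal to $df(T\mathbb{CP}^3)$. The remaining point is that $e_j$ is not itself zero after projection into the hyperplane; this holds because the projection is $e_j-\frac{1}{10}(1,\ldots,1)$-type and is nonzero. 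So $s$ is a nowhere-zero section of $\nu|_H$, and $\nu|_H=\epsilon^1\oplus E$ with $E$ a real plane bundle over $\mathbb{CP}^2$.

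This is the step I expect to be the crux: we must now derive a contradiction from $p_1(\nu|_H)=-4h^2\ne0$ in $H^4(\mathbb{CP}^2)\cong\mathbb Z$. First try: $p_1(E)=e(E)^2$ for the orientable $E$ (here $H^1(\mathbb{CP}^2;\mathbb Z/2)=0$). Writing $e(E)=kh$ gives $p_1=k^2h^2$, and $k^2=-4$ is impossible. If the nowhere-zero section needs extra care, for instance if one wants the Hessian in the normal direction, the rank-one structure gives it: $g$ vanishes to exactly second order transversally, its Hessian is positive definite in the two normal directions $\bm a_j$-complex, and $s$ is the second fundamental form image. This contradiction proves the theorem.
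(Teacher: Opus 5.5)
Your lower-bound argument is correct and essentially matches the paper's proof of Theorem~\ref{thm:main}. It uses Parseval normalization, and an immersion argument based on the absence of nonzero $\bm u\bm v^*+\bm v\bm u^*$ in $\ker\cA$. It then computes $p_1(\nu)=-4h^2$ for the rank-three normal bundle of $\mathbb{CP}^3\hookrightarrow\mathbb{R}^9$, and takes the constant normal field $\bm e_j-\frac{1}{10}\bm 1$ along the hyperplane where $\rho_j$ attains its minimum $0$. The splitting then forces $p_1=k^2\bar h^2$, and $k^2=-4$ is impossible. Your closing remark about Hessians and the second fundamental form is unnecessary. The section is a constant vector of norm $\sqrt{9/10}$, so it is nowhere zero with no further work.

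\textbf{The gap is the upper bound.} The statement asserts that the minimum is \emph{exactly} $11$. Your proposal ends by proving only that no family of ten vectors is phase retrievable, which gives $N_{\min}(\mathbb{C}^4)\ge 11$. You never exhibit or cite an eleven-vector phase retrievable family in $\mathbb{C}^4$. Your topological argument cannot supply one, since an obstruction argument never produces a frame. The general results you might fall back on give only $N_{\min}(\mathbb{C}^4)\le 4M-4=12$ for generic frames. The paper closes this half by invoking Vinzant's explicit eleven-vector frame and its certificate of injectivity \cite{Vinzant15}. You need that result, or an equivalent explicit construction, to turn your lower bound into the claimed equality.

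A smaller point: to rule out every $N\le 10$, state explicitly that phase retrievability is preserved by appending vectors. Then a phase retrievable family with fewer than ten vectors would pad to one with exactly ten. You use this only implicitly, in your remark about zero vectors.
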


\begin{proof}
Theorem~\ref{thm:main} establishes the lower bound $N \geq 11$. Vinzant's explicit eleven-vector frame in $\mathbb{C}^4$ is known to perform phase retrieval \cite{Vinzant15}. Hence, the lower and upper bounds coincide, proving Corollary~\ref{cor:exact}.
\end{proof}

Our result has a direct consequence for the informational completeness of POVMs with respect to pure states using orthonormal bases in $\mathbb{C}^4$.

\begin{corollary} \label{cor:bases}
Four orthonormal bases are necessary and sufficient to distinguish all pure states in $\mathbb{C}^4$.
\end{corollary}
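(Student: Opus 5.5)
The corollary should follow from Theorem~\ref{thm:main} together with the known sufficiency of four bases from \cite{Carmeli15}; all the work is in necessity. I would argue by contradiction. Suppose three orthonormal bases $\{\bm{u}^{(k)}_1,\dots,\bm{u}^{(k)}_4\}$, $k=1,2,3$, distinguish all pure states in $\C^4$. Since the bases each have four elements, this gives $12$ rank-one measurements $|\langle \bm{u}^{(k)}_j,\bm{x}\rangle|^2$. For pure states these are exactly the phase retrieval measurements of the $12$ vectors. Injectivity on pure states extends to injectivity on $\C^4/\mathbb{T}$ by scaling: $\norm{\vx}^2$ is recovered as $\sum_j |\langle \bm{u}^{(1)}_j,\vx\rangle|^2$, so one can normalize, and the case $\vx=0$ is trivial.

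The key step is to reduce $12$ to $10$. Within each orthonormal basis,
\[
\sum_{j=1}^4 |\langle \bm{u}^{(k)}_j,\vx\rangle|^2=\norm{\vx}^2,
\]
and this quantity is independent of $k$. So I would discard $\bm{u}^{(2)}_4$ and $\bm{u}^{(3)}_4$, keeping all four vectors of the first basis. Their intensities are recovered as
\[
|\langle \bm{u}^{(k)}_4,\vx\rangle|^2=\sum_{j=1}^4|\langle \bm{u}^{(1)}_j,\vx\rangle|^2-\sum_{j=1}^3|\langle \bm{u}^{(k)}_j,\vx\rangle|^2 .
\]
Hence the $10$ remaining measurements determine all $12$. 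In particular, if two signals agree on the $10$ intensities, they agree on all $12$ and are therefore equal up to a global phase. The $10$ remaining vectors would then possess the phase retrieval property in $\C^4$, contradicting Theorem~\ref{thm:main}.

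So three bases fail. Since fewer bases give even fewer measurements, one or two bases also fail; alternatively, pad with repeated bases, which adds no information. Combined with the sufficiency of four bases from \cite{Carmeli15}, this proves the corollary. I expect the only delicate point to be the bookkeeping around normalization and the passage from pure states to all of $\C^4/\mathbb{T}$. The substantive obstruction is entirely contained in Theorem~\ref{thm:main}.
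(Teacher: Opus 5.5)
Your proposal is correct and follows essentially the same route as the paper: keep all four vectors of the first basis and three from each of the other two, recover the dropped intensities via $\sum_{j=1}^4|\langle \bm{u}^{(1)}_j,\bm{x}\rangle|^2=\|\bm{x}\|^2$, normalize to reduce to pure states, and contradict Theorem~\ref{thm:main}, with sufficiency taken from \cite{Carmeli15}. Your remark that one or two bases also fail is something the paper leaves implicit, and it does no harm.
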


\begin{proof}
The sufficiency of four orthonormal bases in every finite dimension is known from the constructions in \cite{Carmeli15}. It remains to prove that three bases are insufficient. Let
\[
    \mathcal{B}_\ell=\{\bm{u}_{\ell,1},\bm{u}_{\ell,2},\bm{u}_{\ell,3},\bm{u}_{\ell,4}\},
    \qquad \ell=1,2,3,
\]
be three orthonormal bases of $\mathbb{C}^4$, and define $\bm{P}_{\ell,j}=\bm{u}_{\ell,j}\bm{u}_{\ell,j}^*$. 
Assume, for the sake of contradiction, that these three bases can distinguish all pure states. We will show that this implies the existence of a $10$-vector family possessing the phase retrieval property. Specifically, these $10$ vectors can simply be chosen as:
\[
    \bm{u}_{1,1},\bm{u}_{1,2},\bm{u}_{1,3},\bm{u}_{1,4}, \bm{u}_{2,1},\bm{u}_{2,2},\bm{u}_{2,3}, \bm{u}_{3,1},\bm{u}_{3,2},\bm{u}_{3,3}.
\]
To verify this, let $\bm{x},\bm{y}\in\mathbb{C}^4$ be two vectors yielding the same intensity measurements. Summing the four intensities corresponding to the first basis yields $\norm{\bm{x}}^2 = \norm{\bm{y}}^2$, which implies $\norm{\bm{x}} = \norm{\bm{y}}$. If this norm is zero, then $\bm{x}=\bm{y}=\bm{0}$. Otherwise, let $\tilde{\bm{x}}=\bm{x}/\norm{\bm{x}}$ and $\tilde{\bm{y}}=\bm{y}/\norm{\bm{y}}$. Note that $\sum_{j=1}^4\bm{P}_{\ell,j}=\bm{I}_4$ for $\ell=1,2,3$. Consequently, $\tilde{\bm{x}}\tilde{\bm{x}}^*$ and $\tilde{\bm{y}}\tilde{\bm{y}}^*$ are pure states that yield identical measurement outcomes with respect to the three orthonormal bases $\mathcal{B}_\ell$, $\ell=1,2,3$. By our assumption, this forces $\tilde{\bm{x}}\tilde{\bm{x}}^*=\tilde{\bm{y}}\tilde{\bm{y}}^*$, which means $\bm{x}=\bm{y} \pmod{\mathbb{T}}$. In either case, we obtain a set of ten vectors in $\mathbb{C}^4$ capable of phase retrieval, which directly contradicts Theorem~\ref{thm:main}. Thus, three bases are impossible, making four the minimal requirement.
\end{proof}

\subsection{Organization}
The remainder of this paper is organized as follows. In Section 2, we introduce the necessary notation and  concepts from differential geometry, specifically focusing on the properties of tangent and normal bundles over complex projective spaces. Section 3 is devoted to the proof of our main result, Theorem~\ref{thm:main}.  Finally, in Section 4, we conclude the paper with a brief discussion of our findings and outline potential directions for future research.

\section{Preliminaries}\label{sec:prelim}

\subsection{Notation and conventions}\label{sec:notation}
We use the following conventions throughout. Bold lowercase $\bm{x}$ and uppercase $\bm{X}$ letters denote vectors and matrices, respectively. The symbols $\mathbb{C}$, $\mathbb{R}$, and $\mathbb{Z}$ denote the complex numbers, real numbers, and integers. The complex projective space $\mathbb{CP}^3 = (\mathbb{C}^4\setminus\{0\})/\mathbb{C}^*$ is a manifold of complex dimension $3$. For a topological space $X$, $H^k(X;\mathbb{Z})$ denotes its $k$th integral singular cohomology group, and $H^*(X;\mathbb{Z}) = \bigoplus_{k\geq 0} H^k(X;\mathbb{Z})$ is the total cohomology ring equipped with the cup product.

Let $\mathcal{O}_X$ denote the  trivial complex line bundle. On $\mathbb{CP}^3$, $\mathcal{O}_{\mathbb{CP}^3}(-1)$ is the tautological line bundle, its dual $\mathcal{O}_{\mathbb{CP}^3}(1)$ is the hyperplane line bundle, and $\mathcal{O}_{\mathbb{CP}^3}(k) = \mathcal{O}_{\mathbb{CP}^3}(1)^{\otimes k}$. The direct sum of $r$ copies of a bundle $E$ is denoted by $E^{\oplus r}$, and the trivial real rank-$r$ bundle by $\underline{\mathbb{R}}^r$. The holomorphic tangent bundle $T^{1,0}\mathbb{CP}^3$ has complex rank $3$; its underlying real tangent bundle is $T\mathbb{CP}^3 \cong (T^{1,0}\mathbb{CP}^3)_{\mathbb{R}}$, where the subscript $\mathbb{R}$ indicates forgetting the complex structure. For a complex rank-$r$ bundle $E$, its total Chern class is $c(E) = \sum_{j=0}^r c_j(E)$. For a complex line bundle, its first Chern class $c_1$ coincides with the Euler class $e$ of its underlying oriented real two-plane bundle.

The hyperplane class $h = c_1(\mathcal{O}_{\mathbb{CP}^3}(1))$ generates $H^2(\mathbb{CP}^3;\mathbb{Z}) \cong \mathbb{Z}$. Geometrically, $h$ is the Poincar\'e dual of the inclusion of a projective hyperplane $\iota: \mathbb{CP}^2 \hookrightarrow \mathbb{CP}^3$. By duality, $c_1(\mathcal{O}_{\mathbb{CP}^3}(-1)) = -h$. Higher-degree classes are formed by cup products,  e.g., $h^2 \in H^4, h^3 \in H^6$. Since the real dimension of $\mathbb{CP}^3$ is $6$, $H^8(\mathbb{CP}^3;\mathbb{Z}) = 0$, which necessarily forces $h^4 = 0$. The total cohomology ring is thus completely determined as
\begin{equation}\label{eq:cohomology-ring}
    H^*(\mathbb{CP}^3;\mathbb{Z}) \cong \mathbb{Z}[h]/(h^4), \qquad \deg h = 2,
\end{equation}
where the quotient of the polynomial ring $\mathbb{Z}[h]$ by the ideal $(h^4)$ algebraically enforces this geometric truncation. Pulling back to the hyperplane, we denote $\bar{h} = \iota^*h$. It immediately follows that $H^*(\mathbb{CP}^2;\mathbb{Z}) \cong \mathbb{Z}[\bar{h}]/(\bar{h}^3)$, where $\bar{h}^2$ generates $H^4(\mathbb{CP}^2;\mathbb{Z}) \cong \mathbb{Z}$. For the standard properties of Chern, Pontryagin, and Euler classes used below, we refer to \cite{MilnorStasheff}.

\subsection{Basic differential geometry: tangent and normal bundles}
We briefly recall the differential-geometric concepts utilized in our proof. If $M$ is a smooth real manifold and $p\in M$, the tangent space $T_pM$ is the real vector space consisting of the velocity vectors of smooth curves passing through $p$. More precisely, two curves $\gamma_1,\gamma_2:(-\varepsilon,\varepsilon)\to M$ with $\gamma_1(0)=\gamma_2(0)=p$ represent the same tangent vector if their first-order derivatives agree in one, and consequently in all, smooth local coordinate charts at $p$. The disjoint union
\[
    TM=\bigsqcup_{p\in M}T_pM
\]
forms the tangent bundle of $M$. If $M$ has real dimension $n$, then $TM$ is a real vector bundle of rank $n$.
For a smooth map $f:M\to N$, its differential at $p$ is the linear map
\[
    df_p:T_pM\longrightarrow T_{f(p)}N
\]
induced by the derivatives of these smooth curves. The map $f$ is an immersion if the differential $df_p$ is injective at every point $p \in M$. It is an embedding if it is an immersion and also a homeomorphism from $M$ onto its image, with the image carrying the induced subspace topology. For our purposes, the compactness of $\mathbb{CP}^3$ guarantees that any injective immersion into a Euclidean space is automatically a smooth embedding.

Suppose now that $f:M\hookrightarrow\mathbb{R}^q$ is a smooth embedding of an $n$-dimensional manifold. The standard Euclidean inner product defines, at each $p\in M$, the normal space
\[
    \nu_{f,p}
    =\bigl(df_p(T_pM)\bigr)^\perp
    \subset T_{f(p)}\mathbb{R}^q\cong\mathbb{R}^q.
\]
These spaces collectively form the normal bundle
\[
    \nu_f=\bigsqcup_{p\in M}\nu_{f,p}\longrightarrow M,
\]
which has a real rank of $q-n$. A section of $\nu_f$ is therefore a smooth assignment of a normal vector to the embedded manifold at every point. The orthogonal decomposition of the ambient tangent bundle yields the bundle isomorphism
\[
    f^*T\mathbb{R}^q\cong TM\oplus\nu_f.
\]
Because the tangent bundle $T\mathbb{R}^q$ is trivial, this reduces to the stable tangent-normal relation
\[
    TM\oplus\nu_f\cong\underline{\mathbb{R}}^{\,q}.
\]
If both the ambient space and $M$ are oriented, this relation induces a canonical orientation on $\nu_f$.

If $H\subset M$ is a submanifold and $s$ is a nowhere-zero smooth section of $\nu_f|_H$, then the line spanned by $s(p)$ constitutes a trivial real line subbundle. Taking its orthogonal complement yields the Whitney sum splitting
\[
    \nu_f|_H\cong\underline{\mathbb{R}}\oplus\eta,
\]
where $\eta$ is a vector bundle of rank $\operatorname{rank}(\nu_f)-1$. In our proof stated below, $M=\mathbb{CP}^3$, $q=9$, and $H\cong\mathbb{CP}^2$; hence, $\nu_f$ has rank three, and the nowhere-zero normal section produces an oriented rank-two bundle $\eta$. This provides the geometric justification for equating the first Pontryagin class of $\nu_f|_H$ with the square of the Euler class of $\eta$. For comprehensive background on smooth manifolds, tangent bundles, and normal bundles, we refer the reader to \cite{Lee09}.

\section{Proof of the main result}
\subsection{Overview of the proof}
The proof consists of four main ingredients. First, in Lemma~\ref{lem:normalize}, we demonstrate that it suffices to consider the case where the measurement vectors form a Parseval frame. Second, adapting a result from \cite[Lemma 9]{BCMN14}, we restate in Lemma~\ref{lem:kernel} that a Parseval frame possesses the phase retrieval property if and only if there are no nonzero Hermitian matrices of rank at most two in the kernel of its associated linear measurement map. 
Next, we show that if a Parseval frame consisting of ten vectors possesses the phase retrieval property, it canonically defines a smooth embedding
\[
    \mathbb{CP}^3\hookrightarrow\mathbb{R}^9.
\]
The normal bundle of this embedding has a real rank of three. For any nonzero measurement vector $\bm{b}_j$, exploiting the rank-one structure of the coordinates allows us to show that on the hyperplane
\[
    H_j=\{[\bm{x}]\in\mathbb{CP}^3:\bm{b}_j^*\bm{x}=0\}\cong\mathbb{CP}^2,
\]
the restricted normal bundle splits off a trivial real line. Finally, utilizing characteristic classes, we prove that such a topological splitting is rigorously impossible. This contradiction establishes our main result.

\subsection{Auxiliary lemmas}
We begin with a straightforward but essential lemma regarding frame normalization.

\begin{lemma}\label{lem:normalize}
Suppose that the matrix $\bm{A}=(\bm{a}_1,\ldots,\bm{a}_N)\in\mathbb{C}^{M\times N}$ possesses the phase retrieval property in $\mathbb{C}^M$. Then its column vectors span $\mathbb{C}^M$, and the associated frame operator
\[
    \bm{S} := \sum_{j=1}^{N}\bm{a}_j\bm{a}_j^*
\]
is strictly positive definite. By defining $\bm{b}_j := \bm{S}^{-1/2}\bm{a}_j$ for $j=1,\ldots,N$, the resulting normalized frame $\bm{B}=(\bm{b}_1,\ldots,\bm{b}_N)$ is a Parseval frame satisfying
\[
    \sum_{j=1}^{N}\bm{b}_j\bm{b}_j^*=\bm{I}_M,
\]
and $\bm{B}$ has the phase retrieval property.
\end{lemma}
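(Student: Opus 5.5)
The proof has three parts: show the columns span $\mathbb{C}^M$, deduce that $\bm S$ is positive definite, and then show that the phase retrieval property is unchanged under the invertible change of variables $\bm x\mapsto \bm S^{1/2}\bm x$.

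\emph{Spanning.} Suppose the columns $\bm a_1,\ldots,\bm a_N$ do not span $\mathbb{C}^M$. Then there is a nonzero $\bm z\in\mathbb{C}^M$ with $\bm a_j^*\bm z=0$ for every $j$, so $\cF_{\bm A}(\bm z)=\bm 0=\cF_{\bm A}(\bm 0)$. Since $\bm 0\neq\omega\bm z$ for every $\omega\in\mathbb{T}$, this contradicts Definition~\ref{def:phase-retrieval}. Hence the columns span $\mathbb{C}^M$.

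\emph{Positive definiteness.} For any $\bm x\in\mathbb{C}^M$ we have
\[
    \bm x^*\bm S\bm x=\sum_{j=1}^N\lvert\bm a_j^*\bm x\rvert^2\ \ge\ 0 .
\]
If this sum is zero, then $\bm x$ is orthogonal to every column, so $\bm x=\bm 0$ by the spanning property. Thus $\bm S$ is Hermitian and strictly positive definite. It therefore has a unique positive definite square root $\bm S^{1/2}$, which is invertible, and $\bm S^{-1/2}$ is Hermitian.

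\emph{Parseval property.} Using that $\bm S^{-1/2}$ is Hermitian,
\[
    \sum_{j=1}^N\bm b_j\bm b_j^*=\bm S^{-1/2}\Bigl(\sum_{j=1}^N\bm a_j\bm a_j^*\Bigr)\bm S^{-1/2}=\bm S^{-1/2}\bm S\,\bm S^{-1/2}=\bm I_M .
\]

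\emph{Transfer of phase retrieval.} Again because $\bm S^{-1/2}$ is Hermitian, $\bm b_j^*\bm x=\bm a_j^*\bm S^{-1/2}\bm x$, and so
\[
    \cF_{\bm B}(\bm x)=\cF_{\bm A}(\bm S^{-1/2}\bm x).
\]
Suppose $\cF_{\bm B}(\bm x)=\cF_{\bm B}(\bm y)$. Then $\cF_{\bm A}(\bm S^{-1/2}\bm x)=\cF_{\bm A}(\bm S^{-1/2}\bm y)$, and the phase retrieval property of $\bm A$ gives $\bm S^{-1/2}\bm y=\omega\bm S^{-1/2}\bm x$ for some $\omega\in\mathbb{T}$. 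Multiplying on the left by $\bm S^{1/2}$ yields $\bm y=\omega\bm x$. Hence $\bm B$ has the phase retrieval property.

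No step is a genuine obstacle. The only points needing care are that the conjugate transpose of $\bm S^{-1/2}$ is $\bm S^{-1/2}$ itself, which holds because $\bm S$ is Hermitian positive definite, and that $\bm S^{1/2}$ is invertible, which is what lets the change of variables be undone.
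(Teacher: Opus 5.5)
Your proof is correct and follows essentially the same route as the paper. The shared steps are the spanning argument via a vector orthogonal to every column, the direct Parseval computation, and the transfer of injectivity through the invertible change of variables $\bm x\mapsto \bm S^{1/2}\bm x$ (equivalently $\cF_{\bm B}(\bm x)=\cF_{\bm A}(\bm S^{-1/2}\bm x)$); you also make explicit two points the paper leaves terse, namely the quadratic-form argument for strict positive definiteness and the Hermitian property of $\bm S^{-1/2}$ behind $\bm b_j^*=\bm a_j^*\bm S^{-1/2}$.
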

\begin{proof}
If the vectors $\{\bm a_j\}_{j=1}^N$ did not span $\mathbb C^M$, any nonzero vector $\bm v$ in their orthogonal complement would yield $|\bm a_j^*\bm v|^2 = 0$ for all $j$. This matches the intensity measurements of the zero vector, which contradicts the injectivity required for phase retrieval since $\bm v \neq \bm 0$. Hence, $\bm S$ is strictly positive definite. By construction, we verify the Parseval condition:
\[
    \sum_{j=1}^{N}\bm b_j\bm b_j^*
    =
    \bm S^{-1/2}\left(\sum_{j=1}^{N}\bm a_j\bm a_j^*\right)\bm S^{-1/2}
    =
    \bm I_M.
\]
For any $\bm x\in\mathbb C^M$, letting $\bm y=\bm S^{1/2}\bm x$ yields
\[
    \bm b_j^*\bm y
    =
    \bm a_j^*\bm S^{-1/2}\bm S^{1/2}\bm x
    =
    \bm a_j^*\bm x.
\]
Thus, the intensity maps of $\bm B$ evaluated at $\bm y$ are identical to those of $\bm A$ evaluated at $\bm x$. Since they are related by the invertible linear transformation $\bm y=\bm S^{1/2}\bm x$, which inherently preserves equivalence modulo a global phase, $\bm B$ strictly inherits the phase retrieval property.
\end{proof}

For a family matrix $\bm B=(\bm b_1,\ldots,\bm b_N)\in\mathbb C^{M\times N}$, define the rank-one operators
$\bm P_j=\bm b_j\bm b_j^*$ and the real-linear measurement map $\cA_{\bm B}:\Herm(M)\longrightarrow\mathbb R^N$ given by
\[
    \cA_{\bm B}(\bm X)
    =
    \bigl(\operatorname{Tr}(\bm P_1\bm X),\ldots,\operatorname{Tr}(\bm P_N\bm X)\bigr)
    =
    \bigl(\bm b_1^*\bm X\bm b_1,\ldots,\bm b_N^*\bm X\bm b_N\bigr).
\]
Here, $\Herm(M)$ denotes the space of $M\times M$ Hermitian matrices, treated as a real vector space.

The subsequent lemma is a restatement of Lemma 9 from \cite{BCMN14}. We include a self-contained proof here for completeness.
\begin{lemma} \label{lem:kernel}
Let $\bm B=(\bm b_1,\ldots,\bm b_N)$ be a Parseval frame in $\mathbb C^M$, namely, $\sum_{j=1}^{N}\bm b_j\bm b_j^*=\bm I_M$. Then
$\bm B$ has the phase retrieval property if and only if
\begin{equation} \label{eq:HermKer}
    \ker\cA_{\bm B}
    \cap
    \{\bm X\in\Herm(M):\operatorname{rank}\bm X\leq2\}
    =
    \{\bm 0\}.
\end{equation}
\end{lemma}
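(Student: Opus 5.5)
The argument rests on one identity. For any $\bm x,\bm y\in\mathbb C^M$ and each $j$,
\[
    \lvert\bm b_j^*\bm x\rvert^2-\lvert\bm b_j^*\bm y\rvert^2=\bm b_j^*(\bm x\bm x^*-\bm y\bm y^*)\bm b_j ,
\]
so $\cF_{\bm B}(\bm x)=\cF_{\bm B}(\bm y)$ holds exactly when $\bm x\bm x^*-\bm y\bm y^*\in\ker\cA_{\bm B}$. Also, $\bm x\bm x^*=\bm y\bm y^*$ holds exactly when $\bm y=\omega\bm x$ for some $\omega\in\mathbb T$. Indeed, if $\bm x\neq\bm 0$, comparing ranges gives $\bm y=c\bm x$, and then $\lvert c\rvert^2=1$. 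With these two facts, phase retrieval becomes the statement that every kernel element of the form $\bm x\bm x^*-\bm y\bm y^*$ vanishes. The proof then reduces to checking that the nonzero matrices of this form are exactly the nonzero Hermitian matrices of rank at most two, up to sign.

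\emph{($\Leftarrow$)} Assume \eqref{eq:HermKer} holds and $\cF_{\bm B}(\bm x)=\cF_{\bm B}(\bm y)$. The matrix $\bm X=\bm x\bm x^*-\bm y\bm y^*$ is Hermitian, its range lies in $\operatorname{span}\{\bm x,\bm y\}$, so its rank is at most two, and it lies in $\ker\cA_{\bm B}$. Hence $\bm X=\bm 0$, and the second fact gives $\bm y=\omega\bm x$.

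\emph{($\Rightarrow$)} Argue by contraposition. Let $\bm 0\neq\bm X\in\ker\cA_{\bm B}$ be Hermitian of rank at most two. By the spectral theorem, $\bm X=\lambda_1\bm u_1\bm u_1^*+\lambda_2\bm u_2\bm u_2^*$ with orthonormal $\bm u_1,\bm u_2$ and real $\lambda_i$. Here the Parseval hypothesis is needed. Since $\sum_j\bm b_j\bm b_j^*=\bm I_M$,
\[
    \sum_j\bm b_j^*\bm X\bm b_j=\operatorname{Tr}\Bigl(\bm X\sum_j\bm b_j\bm b_j^*\Bigr)=\operatorname{Tr}\bm X ,
\]
and this sum is zero because $\cA_{\bm B}(\bm X)=\bm 0$. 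So $\lambda_1+\lambda_2=0$. Since $\bm X\neq\bm 0$, the eigenvalues are nonzero and of opposite signs; after relabeling, $\lambda_1=-\lambda_2=\lambda>0$. Set $\bm x=\sqrt\lambda\,\bm u_1$ and $\bm y=\sqrt\lambda\,\bm u_2$. Then $\bm X=\bm x\bm x^*-\bm y\bm y^*$ lies in the kernel, so by the identity $\cF_{\bm B}(\bm x)=\cF_{\bm B}(\bm y)$. But $\bm x$ and $\bm y$ are orthogonal and nonzero, so $\bm y\neq\omega\bm x$ for every $\omega\in\mathbb T$, and phase retrieval fails.

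The main obstacle is in the ($\Rightarrow$) direction. A general rank-two Hermitian matrix can be definite, for example with both eigenvalues positive, and then it is not of the form $\bm x\bm x^*-\bm y\bm y^*$. The trace identity from the Parseval condition is what excludes this case, and it is the only place the Parseval hypothesis is used. In the rank-one case the same identity gives $\lambda_1=0$, so that case cannot occur for a nonzero kernel element and needs no separate treatment.
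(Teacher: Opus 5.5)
Your proof is correct and follows the paper's argument essentially step for step. In $(\Leftarrow)$ you use $\bm X=\bm x\bm x^*-\bm y\bm y^*$, and in $(\Rightarrow)$ you use the spectral decomposition together with the Parseval trace identity to force eigenvalues $\lambda,-\lambda$ and build orthogonal $\bm x,\bm y$. The differences are cosmetic: you argue $(\Leftarrow)$ directly rather than by contradiction, and you justify explicitly (by comparing ranges) that $\bm x\bm x^*=\bm y\bm y^*$ forces $\bm y=\omega\bm x$, which the paper takes for granted.
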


\begin{proof}
(\textit{Sufficiency}) We prove it by contradiction. Suppose that $\bm B$ fails to have the phase retrieval property. Then there exist $\bm x,\bm y\in\mathbb C^M$ that are not identical up to a global phase, yet yield the same measurements:
\[
    |\bm b_j^*\bm x|^2=|\bm b_j^*\bm y|^2
    \qquad \text{for all } j=1,\ldots,N.
\]
Construct the matrix $\bm X=\bm x\bm x^*-\bm y\bm y^*$. If $\bm X=\bm 0$, it would follow that $\bm x\bm x^*=\bm y\bm y^*$, forcing $\bm x$ and $\bm y$ to be related by a global phase, which contradicts our assumption. Thus, $\bm X$ is a nonzero Hermitian matrix with $\operatorname{rank}\bm X\leq2$. Moreover,
\[
    \operatorname{Tr}(\bm P_j\bm X)
    =
    |\bm b_j^*\bm x|^2-|\bm b_j^*\bm y|^2
    =
    0 \qquad \mbox{for all} \quad j,
\]
 implying $\bm X \in \ker\cA_{\bm B}$. This contradicts \eqref{eq:HermKer}.

(\textit{Necessity}) Conversely, assume there exists a nonzero Hermitian matrix $\bm X\in\ker\cA_{\bm B}$ with $\operatorname{rank}\bm X\leq2$. Since $\bm B$ is a Parseval frame, the trace of $\bm X$ vanishes:
\[
    \operatorname{Tr}(\bm X)
    =
    \operatorname{Tr}\!\left(\sum_{j=1}^{N}\bm P_j\bm X\right)
    =
    \sum_{j=1}^{N} \bm b_j^* \bm X \bm b_j
    =
    0.
\]
Because $\bm X \neq \bm 0$ has rank at most two and zero trace, it must have exactly rank two, with nonzero eigenvalues $\lambda$ and $-\lambda$ for some $\lambda>0$. By the spectral theorem, $\bm X$ admits the eigendecomposition
\[
    \bm X=\lambda(\bm u\bm u^*-\bm v\bm v^*)
\]
for some orthonormal vectors $\bm u,\bm v$. Setting $\bm x=\sqrt{\lambda}\,\bm u$ and $\bm y=\sqrt{\lambda}\,\bm v$, we have $\bm X=\bm x\bm x^*-\bm y\bm y^*$. Since $\bm u$ and $\bm v$ are nonzero and orthogonal, $\bm x$ and $\bm y$ are likewise orthogonal and thus cannot be related by a global phase. Finally, since $\bm X\in\ker\cA_{\bm B}$, we have
\[
    |\bm b_j^*\bm x|^2-|\bm b_j^*\bm y|^2
    =
    \operatorname{Tr}(\bm P_j\bm X)
    =
    0
    \qquad \text{for all } j=1,\ldots,N.
\]
This demonstrates that $\bm x$ and $\bm y$ have identical measurements, meaning phase retrieval fails.
\end{proof}

\subsection{Proof of Theorem~\ref{thm:main}}
Assume, for the sake of contradiction, that the matrix
\[
    \bm{A}=(\bm{a}_1,\ldots,\bm{a}_{10})\in\mathbb{C}^{4\times 10}
\]
possesses the phase retrieval property. By Lemma~\ref{lem:normalize}, we may assume without loss of generality that the measurement vectors form a Parseval frame, denoted by
\[
    \bm{B}=(\bm{b}_1,\ldots,\bm{b}_{10}),
    \qquad
    \sum_{j=1}^{10}\bm{b}_j\bm{b}_j^*=\bm{I}_4,
\]
which also has the phase retrieval property.
Let $S^{7}=\{\bm{x}\in\mathbb{C}^4:\|\bm{x}\|=1\}$ denote the unit sphere in $\mathbb{C}^4$. Every nonzero vector $\bm{x}\in\mathbb{C}^4$ can be expressed uniquely, up to a global phase, as $\bm{x}=\|\bm{x}\|\bm{u}$ with $\bm{u}\in S^7$. The quotient of $S^7$ by the free action of the unit circle $\mathbb{T}$ yields the complex projective space:
\[
    S^7/\mathbb{T} = \mathbb{CP}^3.
\]
We regard $\mathbb{CP}^3$ as a real smooth manifold of dimension $6$. Since the measurement vectors form a Parseval frame, we can restrict our attention to unit vectors, allowing the phase retrieval property to naturally induce a well-defined map on $\mathbb{CP}^3$.
For convenience, we define the projective intensity coordinates as
\[
    \rho_j([\bm{x}]):=\lvert \bm{b}_j^*\bm{x}\rvert^2,
    \qquad [\bm{x}]\in\mathbb{CP}^3,
\]
where $\bm{x}\in S^7$ is any unit representative of the equivalence class $[\bm{x}]$. This expression is well-defined because replacing $\bm{x}$ with $\omega\bm{x}$, $\omega\in\mathbb{T}$ leaves the absolute value invariant.
Now, let
\[
    \mathcal{Y}=\left\{\bm{y}=(y_1,\ldots,y_{10})\in\mathbb{R}^{10}:
    \sum_{j=1}^{10}y_j=1\right\},
    \qquad
    \mathbb{E}_0=\left\{\bm{u}\in\mathbb{R}^{10}:
    \sum_{j=1}^{10}u_j=0\right\},
\]
and define the mapping
\begin{equation} \label{eq:mapF}
    F:\mathbb{CP}^3\longrightarrow\mathcal{Y},
    \qquad
    F([\bm{x}])=\bigl(\rho_1([\bm{x}]),\ldots,\rho_{10}([\bm{x}])\bigr).
\end{equation}
The Parseval identity guarantees that $\sum_{j=1}^{10}\rho_j([\bm{x}])=1$, confirming that the image of $F$ indeed lies within the affine hyperplane $\mathcal{Y}$. The following proposition establishes that $F$ is a smooth map.

\begin{proposition}\label{prop:smooth-F}
Let $\bm{B}=(\bm{b}_1,\ldots,\bm{b}_{10})\in\mathbb{C}^{4\times 10}$ be a Parseval frame, and let $F$ be defined as in \eqref{eq:mapF}. Then $F$ is a smooth map between real smooth manifolds.  
\end{proposition}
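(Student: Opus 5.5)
The plan is to use the universal property of the quotient smooth structure on $\mathbb{CP}^3 = S^7/\mathbb{T}$. By construction, the smooth structure on $\mathbb{CP}^3$ makes the projection $\pi: S^7 \to \mathbb{CP}^3$, $\pi(\bm x) = [\bm x]$, a smooth submersion; this holds because $\mathbb{T}$ is a compact Lie group acting freely and smoothly on $S^7$, so the quotient manifold theorem applies. A standard fact then says that a map $G:\mathbb{CP}^3 \to N$ into a smooth manifold is smooth if and only if $G\circ\pi: S^7 \to N$ is smooth. For a surjective submersion one has local smooth sections, and $G = (G\circ\pi)\circ\sigma$ locally. I will therefore reduce smoothness of $F$ to smoothness of the lifted map $\tilde F = F\circ\pi$ on $S^7$.

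First, I will observe that $\tilde F(\bm x) = (|\bm b_1^*\bm x|^2,\ldots,|\bm b_{10}^*\bm x|^2)$ is the restriction to $S^7$ of a map $\mathbb{C}^4 \cong \mathbb{R}^8 \to \mathbb{R}^{10}$. Writing $\bm x = \bm s + i\bm t$ with $\bm s,\bm t \in \mathbb{R}^4$, each coordinate $|\bm b_j^*\bm x|^2 = (\operatorname{Re}\bm b_j^*\bm x)^2 + (\operatorname{Im}\bm b_j^*\bm x)^2$ is a real quadratic polynomial in $(\bm s,\bm t)$, hence smooth on $\mathbb{R}^8$. Its restriction to the embedded submanifold $S^7$ is therefore smooth. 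Well-definedness of $F$ was already noted before the statement.

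Second, I will handle the codomain. The Parseval identity gives $\sum_j |\bm b_j^*\bm x|^2 = \bm x^*\bigl(\sum_j \bm b_j\bm b_j^*\bigr)\bm x = \|\bm x\|^2 = 1$ on $S^7$, so $\tilde F$ takes values in the affine hyperplane $\mathcal Y$. This $\mathcal Y$ is a closed embedded submanifold of $\mathbb{R}^{10}$, diffeomorphic to $\mathbb{E}_0 \cong \mathbb{R}^9$ via $\bm y \mapsto \bm y - \frac{1}{10}\mathbf{1}$. A smooth map into $\mathbb{R}^{10}$ with image in an embedded submanifold is smooth as a map into that submanifold. Concretely, composing with an affine chart of $\mathcal{Y}$, for instance dropping the last coordinate, gives a linear image of a smooth map. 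So $\tilde F: S^7\to\mathcal Y$ is smooth, and by the first step $F$ is smooth.

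The only nontrivial point is the quotient step: justifying that $\pi$ is a smooth submersion admitting local sections. If one prefers to avoid quoting the quotient manifold theorem, I would check smoothness directly in the standard affine charts $U_k = \{x_k \neq 0\}$ of $\mathbb{CP}^3$. In $U_k$, a point is represented by $\bm z$ with $z_k = 1$, and the smooth local section is $\sigma(\bm z) = \bm z/\|\bm z\|$. Then
\[
F\circ\sigma(\bm z) = \Bigl(\frac{|\bm b_j^*\bm z|^2}{\|\bm z\|^2}\Bigr)_{j=1}^{10},
\]
which is a quotient of polynomials in the real coordinates with nonvanishing denominator, and hence visibly smooth.
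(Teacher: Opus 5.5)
Your proposal is correct and follows essentially the same route as the paper: you show the $\mathbb{T}$-invariant quadratic lift $\widetilde{F}$ on $S^7$ is smooth, descend it through the surjective submersion $S^7\to\mathbb{CP}^3$ via its characteristic property, and then restrict the codomain to the embedded affine hyperplane $\mathcal{Y}$. Your extra affine-chart computation with the local section $\bm z\mapsto \bm z/\|\bm z\|$ is a valid, more hands-on substitute for quoting the quotient manifold theorem, but it is not needed.
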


\begin{proof}
Let $q:S^7\longrightarrow \mathbb{CP}^3$ be the canonical quotient map, and define the lifted intensity map
\[
    \widetilde{F}:S^7\longrightarrow\mathbb{R}^{10},
    \qquad
    \widetilde{F}(\bm{x})
    =
    \bigl(|\bm{b}_1^*\bm{x}|^2,\ldots,|\bm{b}_{10}^*\bm{x}|^2\bigr).
\]
Since each coordinate function $\bm{x}\mapsto |\bm{b}_j^*\bm{x}|^2$ is a real homogeneous polynomial of degree two with respect to the real and imaginary parts of $\bm{x}$, the map $\widetilde{F}$ is smooth. Furthermore, $\widetilde{F}$ satisfies
\[
    \widetilde{F}(\omega\bm{x})=\widetilde{F}(\bm{x})
    \qquad \text{for all } \omega\in\mathbb{T},
\]
meaning that $\widetilde{F}$ is $\mathbb{T}$-invariant.  Because the projection $q:S^7\to\mathbb{CP}^3$ constitutes a smooth principal $\mathbb{T}$-bundle, the characteristic property of surjective smooth submersions \cite[Theorem 4.29]{Lee12} dictates that any smooth, $\mathbb{T}$-invariant map on the total space $S^7$ descends to a unique smooth map on the base space $\mathbb{CP}^3$. 
Specifically, there exists a unique smooth map $F_0:\mathbb{CP}^3\longrightarrow\mathbb{R}^{10}$ such that $\widetilde{F}=F_0\circ q$. Comparing this with \eqref{eq:mapF}, we observe that $F(q(\bm{x})) = F([\bm{x}]) = \widetilde{F}(\bm{x})$ for all $\bm{x}\in S^7$, which implies $F \circ q = \widetilde{F}$. The surjectivity of $q$ then forces $F = F_0$. Consequently, $F$ is smooth as a map into $\mathbb{R}^{10}$. Finally, because $\mathcal{Y}$ is a smoothly embedded submanifold of $\mathbb{R}^{10}$ and the image of $F$ lies entirely within $\mathcal{Y}$, $F$ is also smooth as a map into $\mathcal{Y}$.
\end{proof}

Let $\bm{1}=(1,\ldots,1)\in\mathbb{R}^{10}$ denote the all-ones vector, and define the centered map
\begin{equation}\label{eq:centered-map}
    G:\mathbb{CP}^3\longrightarrow\mathbb{E}_0,
    \qquad
    G([\bm{x}])=F([\bm{x}])-\frac{1}{10}\bm{1}.
\end{equation}
Since $G$ is formed by translating $F$ by a constant vector, they share the identical differential. Furthermore, because $\mathbb{E}_0$ is a nine-dimensional Euclidean vector space, we can naturally regard $G$ as a map into $\mathbb{R}^9$. We now establish that $G$ is a smooth embedding.

\begin{proposition}\label{prop:projective-embedding}
Let $\bm{B}=(\bm{b}_1,\ldots,\bm{b}_{10})\in\mathbb{C}^{4\times 10}$ be a Parseval frame possessing the phase retrieval property, and let $G$ be defined as in \eqref{eq:centered-map}. Then $G$ is a smooth embedding.
\end{proposition}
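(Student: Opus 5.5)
We need to show that $G$ is an injective immersion; since $\mathbb{CP}^3$ is compact, an injective immersion into $\mathbb{R}^9$ is automatically a smooth embedding, as noted in Section~\ref{sec:prelim}. Smoothness of $G$ follows from Proposition~\ref{prop:smooth-F}, because $G$ differs from $F$ by a constant translation. The plan is therefore to check injectivity and injectivity of the differential separately, deriving both from Lemma~\ref{lem:kernel}.

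\emph{Injectivity.} Suppose $G([\bm x])=G([\bm y])$ with unit representatives $\bm x,\bm y\in S^7$. Then $|\bm b_j^*\bm x|^2=|\bm b_j^*\bm y|^2$ for all $j$. By the phase retrieval property, $\bm y=\omega\bm x$ for some $\omega\in\mathbb{T}$, so $[\bm x]=[\bm y]$.

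\emph{Immersion.} Fix $\bm x\in S^7$. Since $q:S^7\to\mathbb{CP}^3$ is a submersion whose fibres are the $\mathbb{T}$-orbits, $T_{[\bm x]}\mathbb{CP}^3$ is identified with the horizontal space
\[
  \{\bm v\in\mathbb{C}^4:\ \bm x^*\bm v=0\},
\]
which has real dimension $6$ and is the orthogonal complement of both the radial direction and the orbit direction $i\bm x$. Take a horizontal $\bm v$ with $dG_{[\bm x]}(dq(\bm v))=0$. Differentiating $\widetilde F$ along the curve $t\mapsto(\bm x+t\bm v)/\|\bm x+t\bm v\|$ (whose normalization contributes nothing at first order, since $\operatorname{Re}\bm x^*\bm v=0$) gives
\[
  0=\frac{d}{dt}\Big|_{t=0}\bigl|\bm b_j^*(\bm x+t\bm v)\bigr|^2
  =\operatorname{Tr}\bigl(\bm P_j(\bm x\bm v^*+\bm v\bm x^*)\bigr)
  \quad\text{for all } j.
\]
Thus $\bm X:=\bm x\bm v^*+\bm v\bm x^*$ is a Hermitian matrix of rank at most two lying in $\ker\cA_{\bm B}$. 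By Lemma~\ref{lem:kernel}, $\bm X=\bm 0$. Since $\bm x^*\bm x=1$ and $\bm x^*\bm v=0$, we get $\bm X\bm x=\bm v$, hence $\bm v=\bm 0$. So $dG_{[\bm x]}$ is injective.

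The main point requiring care is this identification of the tangent space with the horizontal vectors. One must confirm that the lift $dq(\bm v)$ is well defined, i.e.\ that $dq$ restricted to horizontal vectors is an isomorphism onto $T_{[\bm x]}\mathbb{CP}^3$, and that $dG_{[\bm x]}\circ dq=d\widetilde F_{\bm x}$ by the chain rule applied to $F\circ q=\widetilde F$. Both are standard: $dq$ has kernel exactly $\mathbb{R}\,i\bm x$ on $T_{\bm x}S^7$, and the horizontal space is a complement of that kernel. The real image of $G$ lies in $\mathbb{E}_0\cong\mathbb{R}^9$, so $G$ is an injective immersion $\mathbb{CP}^3\to\mathbb{R}^9$ and hence, by compactness, a smooth embedding.
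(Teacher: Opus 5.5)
Your proposal is correct and follows essentially the same route as the paper: injectivity comes directly from the phase retrieval property, and injectivity of $dG_{[\bm x]}$ comes from observing that the rank-at-most-two Hermitian matrix $\bm X=\bm x\bm v^*+\bm v\bm x^*$ lies in $\ker\cA_{\bm B}$, invoking Lemma~\ref{lem:kernel}, and recovering $\bm v=\bm X\bm x=\bm 0$. The only cosmetic difference is that you differentiate along the explicit normalized curve $(\bm x+t\bm v)/\|\bm x+t\bm v\|$ and check that the normalization vanishes to first order, whereas the paper uses an arbitrary curve in $S^7$ with initial velocity $\bm z$.
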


\begin{proof}
The phase retrieval property inherently guarantees that $F$, and consequently $G$, is injective on $\mathbb{CP}^3$.   Because $\mathbb{CP}^3$ is compact and $\mathbb{E}_0 \cong \mathbb{R}^9$ is Hausdorff, any smooth, injective immersion between them is necessarily a smooth embedding. 
Thus, it suffices to prove that $G$ is a smooth immersion.

Fix a point $[\bm{x}]\in\mathbb{CP}^3$ and choose a unit representative $\bm{x}\in S^7$. The differential of the quotient map $q: S^7 \to \mathbb{CP}^3$ canonically identifies the horizontal subspace $\bm{x}^\perp_{\mathbb{C}}$ with the real tangent space $T_{[\bm{x}]}\mathbb{CP}^3$. Therefore, any tangent vector $\bm{v} \in T_{[\bm{x}]}\mathbb{CP}^3$ can be uniquely represented as $\bm{v} = dq_{\bm{x}}(\bm{z})$ for some horizontal vector
\begin{equation} \label{eq:tangent}
    \bm{z}\in \bm{x}^\perp_{\mathbb{C}}
    =
    \{\bm{z}\in\mathbb{C}^4:\bm{x}^*\bm{z}=0\}.
\end{equation}
Construct the associated Hermitian matrix
\[
    \bm{X}=\bm{x}\bm{z}^*+\bm{z}\bm{x}^*.
\]
It is straightforward to verify that $\bm{X}$ satisfies
\[
    \operatorname{rank}\bm{X}\leq 2, \qquad
    \bm{X}=\bm{X}^*, \qquad
    \operatorname{Tr}(\bm{X})=\bm{z}^*\bm{x}+\bm{x}^*\bm{z}=0.
\]
Recall from the definition of $\cA_{\bm{B}}$ that $\bm{P}_j=\bm{b}_j\bm{b}_j^*\in\mathbb{C}^{4\times 4}$. Let $\bm{x}(t)\in S^7$ be a smooth curve such that $\bm{x}(0)=\bm{x}$ and $\bm{x}'(0)=\bm{z}$. Evaluating the differential of the intensity coordinate $\rho_j$ along the projected curve $q(\bm{x}(t))$ yields
\[
d\rho_j\big|_{[\bm{x}]}(\bm{v}) =
\frac{d}{dt}\bigg|_{t=0} \rho_j\bigl([{\bm{x}(t)}]\bigr) =
\frac{d}{dt}\bigg|_{t=0}
\bigl(\bm{x}(t)^*\bm{P}_j\bm{x}(t)\bigr)=
\bm{z}^*\bm{P}_j\bm{x}+\bm{x}^*\bm{P}_j\bm{z}=
2\operatorname{Re}\bigl(\bm{z}^*\bm{P}_j\bm{x}\bigr).
\]
Using the Hermitian matrix $\bm{X}$ defined above, we can compactly rewrite this differential as
\[
    d\rho_j\big|_{[\bm{x}]}(\bm{v}) = \operatorname{Tr}(\bm{P}_j\bm{X})
\]
for every $j=1,\ldots,10$. Consequently, the differential of $F$ evaluated at $\bm{v}$, viewed as a vector in $\mathbb{R}^{10}$, is exactly
\[
    dF_{[\bm{x}]}(\bm{v}) = \cA_{\bm{B}}(\bm{X}).
\]
Suppose that the tangent vector $\bm{v}$ is in the kernel of $dG_{[\bm{x}]}$. Then $dG_{[\bm{x}]}(\bm{v})=\bm{0}$, which implies $dF_{[\bm{x}]}(\bm{v})=\bm{0}$ and thus $\cA_{\bm{B}}(\bm{X})=\bm{0}$. By Lemma~\ref{lem:kernel}, this forces $\bm{X}=\bm{0}$. Returning to the definition of $\bm{X}$, we have
\[
    \bm{X}\bm{x} = \bm{x}(\bm{z}^*\bm{x})+\bm{z}(\bm{x}^*\bm{x}) = \bm{0} + \bm{z}(1) = \bm{z}.
\]
Thus, $\bm{X}=\bm{0}$ necessarily forces $\bm{z}=\bm{0}$, which in turn implies $\bm{v}=dq_{\bm{x}}(\bm{0})=\bm{0}$. This proves that $dG_{[\bm{x}]}$ is injective, confirming that $G$ is a smooth immersion.
This completes the proof.
\end{proof}

Let
\[
    \pi_\nu:\nu\longrightarrow\mathbb{CP}^3
\]
denote the normal bundle associated with the embedding
$G:\mathbb{CP}^3\hookrightarrow\mathbb{E}_0$. Its fiber over a point
$[\bm{x}]\in\mathbb{CP}^3$ is defined by
\[
    \nu_{[\bm{x}]}
    =
    \left(dG_{[\bm{x}]}
    \bigl(T_{[\bm{x}]}\mathbb{CP}^3\bigr)\right)^\perp
    \subset\mathbb{E}_0,
\]
where the orthogonal complement is taken with respect to the standard Euclidean metric inherited from $\mathbb{R}^{10}$. Since $\dim_{\mathbb{R}}\mathbb{E}_0=9$ and $\dim_{\mathbb{R}}\mathbb{CP}^3=6$, the normal bundle has a real rank of
\begin{equation}\label{eq:normal-rank}
    \operatorname{rank}_{\mathbb{R}}\nu=9-6=3.
\end{equation}
Both $\mathbb{CP}^3$ and $\mathbb{E}_0$ possess natural orientations: $\mathbb{CP}^3$ is equipped with its canonical complex orientation, and $\mathbb{E}_0\cong\mathbb{R}^9$ carries the standard Euclidean orientation. The canonical Whitney sum splitting
\[
    T\mathbb{CP}^3\oplus\nu\cong\underline{\mathbb{R}}^{\,9}
\]
therefore uniquely determines an orientation on $\nu$. 

We now exploit the rank-one structure of the intensity coordinates. The Parseval identity $\sum_{j=1}^{10}\bm{b}_j\bm{b}_j^*=\bm{I}_4$ ensures that at least one measurement vector $\bm{b}_j$ is nonzero. Fix such an index $j$, and define the zero locus
\begin{equation}\label{eq:hyperplane}
   H_j
   =
   \{[\bm{x}]\in\mathbb{CP}^3:\bm{b}_j^*\bm{x}=0\}.
\end{equation}
Because $\bm{b}_j$ acts as a nonzero complex linear functional, its kernel constitutes a three-dimensional complex subspace of $\mathbb{C}^4$. The projectivization of this kernel is therefore precisely
\[
    H_j\cong\mathbb{CP}^2.
\]
Let $\bm{e}_j$ denote the $j$-th standard basis vector of $\mathbb{R}^{10}$, and construct the constant vector
\begin{equation}\label{eq:normal-vector}
   \bm{\gamma}_j=\bm{e}_j-\frac{1}{10}\bm{1}\in\mathbb{E}_0.
\end{equation}
The following proposition demonstrates that $\bm{\gamma}_j$ constitutes a nowhere-zero smooth section of the restricted normal bundle $\nu|_{H_j}$.

\begin{proposition}\label{prop:normal-section}
Let $\bm{B}=(\bm{b}_1,\ldots,\bm{b}_{10})$ be a Parseval frame possessing the phase retrieval property, let $G$ be defined as in \eqref{eq:centered-map}, and let $\nu$ be its associated normal bundle. For every index $j$ such that $\bm{b}_j\neq\bm{0}$, the subspace $H_j$ defined in \eqref{eq:hyperplane} is canonically isomorphic to $\mathbb{CP}^2$, and the vector $\bm{\gamma}_j$ in \eqref{eq:normal-vector} provides a nowhere-zero smooth section of the restricted normal bundle $\nu\vert_{H_j}$.
\end{proposition}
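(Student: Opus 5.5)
We need to show three things: $H_j\cong\mathbb{CP}^2$, $\bm{\gamma}_j$ is orthogonal to the image of $dG$ at every point of $H_j$, and $\bm{\gamma}_j$ never vanishes. The isomorphism $H_j\cong\mathbb{CP}^2$ was already noted before the statement. Since $\bm{b}_j\neq\bm 0$, the functional $\bm x\mapsto \bm b_j^*\bm x$ is nonzero, so $\ker$ is a complex $3$-dimensional subspace $W\subset\mathbb{C}^4$. Then $H_j=\mathbb{P}(W)$, which is a smooth complex submanifold identified with $\mathbb{CP}^2$ after choosing a unitary basis of $W$.

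Nonvanishing is immediate, because $\bm\gamma_j=\bm e_j-\tfrac1{10}\bm 1$ has $j$-th coordinate $9/10\neq0$. It also lies in $\mathbb{E}_0$, since its coordinates sum to $1-1=0$. Because $\bm\gamma_j$ is constant, the map $[\bm x]\mapsto([\bm x],\bm\gamma_j)$ is smooth into the total space $H_j\times\mathbb{E}_0$. Once we show it takes values in the fibers of $\nu$, it is a smooth section of $\nu|_{H_j}$: the normal bundle is a smooth subbundle of the trivial bundle $\mathbb{CP}^3\times\mathbb{E}_0$, and the orthogonal projection onto it fixes this map.

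The key step is orthogonality. Fix $[\bm x]\in H_j$ with unit representative $\bm x$, so $\bm b_j^*\bm x=0$, and take any tangent vector $\bm v=dq_{\bm x}(\bm z)$ with $\bm z\in\bm x^\perp_{\mathbb C}$. By the computation in the proof of Proposition~\ref{prop:projective-embedding}, $dG_{[\bm x]}(\bm v)=\cA_{\bm B}(\bm X)$ with $\bm X=\bm x\bm z^*+\bm z\bm x^*$. Since $dG(\bm v)\in\mathbb{E}_0$, its coordinates sum to zero, so $\langle dG(\bm v),\tfrac1{10}\bm 1\rangle=0$. Hence
\[
\langle dG_{[\bm x]}(\bm v),\bm\gamma_j\rangle=\langle dG_{[\bm x]}(\bm v),\bm e_j\rangle=d\rho_j|_{[\bm x]}(\bm v)=2\operatorname{Re}\bigl(\bm z^*\bm b_j\bm b_j^*\bm x\bigr)=0,
\]
because $\bm b_j^*\bm x=0$. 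Geometrically, $\rho_j=|\bm b_j^*\bm x|^2\ge0$ attains its minimum value $0$ on $H_j$, so its differential vanishes there. Therefore $\bm\gamma_j\perp dG_{[\bm x]}(T_{[\bm x]}\mathbb{CP}^3)$, that is, $\bm\gamma_j\in\nu_{[\bm x]}$ for every $[\bm x]\in H_j$.

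I expect no real obstacle here. The only care needed is the bookkeeping: $\bm\gamma_j$ must lie in $\mathbb{E}_0$, the pairing with $\bm 1$ must drop out, and the vanishing of $d\rho_j$ on $H_j$ must hold for all real tangent directions of $\mathbb{CP}^3$, not only those tangent to $H_j$. The explicit formula $2\operatorname{Re}(\bm z^*\bm P_j\bm x)$ handles the last point uniformly in $\bm z$.
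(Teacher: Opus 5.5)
Your proposal is correct and follows essentially the same route as the paper: it shows $d\rho_j|_{[\bm x]}(\bm v)=2\operatorname{Re}(\bm z^*\bm P_j\bm x)=0$ for every real tangent direction at points of $H_j$, lets the $\bm 1$-component drop out because $dG$ takes values in $\mathbb{E}_0$ (the paper phrases this as $\sum_k d\rho_k=0$), and notes that the constant vector $\bm\gamma_j$ never vanishes. You also justify that the constant map is a smooth section of the subbundle $\nu|_{H_j}$, a step the paper leaves implicit.
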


\begin{proof}
Consider a point $[\bm{x}]\in H_j$ and choose a unit representative $\bm{x} \in S^7$. As established in the proof of Proposition~\ref{prop:projective-embedding}, any tangent vector $\bm{v}\in T_{[\bm{x}]}\mathbb{CP}^3$ can be uniquely written as $\bm{v} = dq_{\bm{x}}(\bm{z})$ for some horizontal vector $\bm{z}\in \bm{x}^\perp_{\mathbb{C}}$. 
According to the definition of $H_j$ in \eqref{eq:hyperplane}, we have $\bm{b}_j^*\bm{x}=0$. Evaluating the differential $d\rho_j$ yields
\[
    d\rho_j\big|_{[\bm{x}]}(\bm{v})
    =
    2\operatorname{Re}\bigl((\bm{b}_j^*\bm{x})^*(\bm{b}_j^*\bm{z})\bigr)
    =
    0.
\]
Importantly, this vanishing property holds for \emph{all} tangent vectors to $\mathbb{CP}^3$ at $[\bm{x}]$, not merely those tangent to the subspace $H_j$. Moreover, the coordinate relation $\sum_{k=1}^{10}\rho_k=1$ immediately implies that
\[
    \sum_{k=1}^{10}d\rho_k\big|_{[\bm{x}]}(\bm{v})=0.
\]
Consequently, applying the definition of $G$ and evaluating the standard inner product on $\mathbb{E}_0$, we obtain
\[
    \left\langle\bm{\gamma}_j,dG_{[\bm{x}]}(\bm{v})\right\rangle
    =
    d\rho_j\big|_{[\bm{x}]}(\bm{v})
    -
    \frac{1}{10}\sum_{k=1}^{10}d\rho_k\big|_{[\bm{x}]}(\bm{v})
    =
    0.
\]
This confirms that $\bm{\gamma}_j$ is strictly orthogonal to the image of the tangent space, meaning $\bm{\gamma}_j \in \nu_{[\bm{x}]}$ for all $[\bm{x}]\in H_j$. Furthermore, this section is nowhere zero, as its Euclidean norm is constant and strictly positive:
\[
    \|\bm{\gamma}_j\|^2
    =
    \left\|\bm{e}_j-\frac{1}{10}\bm{1}\right\|^2
    =
    1-\frac{2}{10}+\frac{10}{100}
    =
    \frac{9}{10}.
\]
Thus, $\bm{\gamma}_j$ constitutes a valid nowhere-zero smooth section of $\nu|_{H_j}$.
\end{proof}

Proposition~\ref{prop:normal-section} immediately yields the following topological consequence.

\begin{corollary}\label{cor:normal-splitting}
Under the hypotheses of Proposition~\ref{prop:normal-section}, there
exists an oriented real rank-two vector bundle $\eta_j\to H_j$ such that
\begin{equation} \label{eq:nu}
    \nu|_{H_j}\cong\underline{\mathbb{R}}\oplus\eta_j.
\end{equation}
\end{corollary}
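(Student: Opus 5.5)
The plan is to build $\eta_j$ directly from the section $\bm{\gamma}_j$ of Proposition~\ref{prop:normal-section}, take orthogonal complements inside the fibers of $\nu$, and then check that the result is a smooth, oriented, rank-two bundle.

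\emph{Defining the complement.} Since $\bm{\gamma}_j$ is a nowhere-zero smooth section of $\nu|_{H_j}$, the assignment $[\bm{x}]\mapsto \mathbb{R}\bm{\gamma}_j$ is a line subbundle $L_j\subset\nu|_{H_j}$. It is trivialized by the constant vector $\bm{\gamma}_j/\|\bm{\gamma}_j\|$, so $L_j\cong\underline{\mathbb{R}}$. Using the Euclidean metric that $\nu$ inherits from $\mathbb{E}_0$, set
\[
  \eta_j=\bigl\{(p,\bm{w}) : p\in H_j,\ \bm{w}\in\nu_p,\ \langle\bm{w},\bm{\gamma}_j\rangle=0\bigr\}.
\]
By \eqref{eq:normal-rank}, each fiber of $\eta_j$ has dimension $3-1=2$.

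\emph{Smoothness of $\eta_j$.} The orthogonal projection $\bm{w}\mapsto\bm{w}-\langle\bm{w},\bm{\gamma}_j\rangle\bm{\gamma}_j/\|\bm{\gamma}_j\|^2$ is a smooth bundle endomorphism of $\nu|_{H_j}$ of constant rank two. Its image is therefore a smooth subbundle, namely $\eta_j$. Moreover, $\nu|_{H_j}=L_j\oplus\eta_j$ fiberwise, and $L_j\cap\eta_j=0$. This gives the isomorphism \eqref{eq:nu}.

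\emph{Orientation of $\eta_j$.} This is the step needing the most care. The bundle $\nu$ is oriented by the Whitney splitting $T\mathbb{CP}^3\oplus\nu\cong\underline{\mathbb{R}}^9$, so $\nu|_{H_j}$ is oriented too. The trivial summand $L_j$ carries the orientation of $\bm{\gamma}_j$. Orient $\eta_j$ by declaring an ordered basis $(\bm{w}_1,\bm{w}_2)$ of $(\eta_j)_p$ positive exactly when $(\bm{\gamma}_j,\bm{w}_1,\bm{w}_2)$ is a positive basis of $\nu_p$. This rule is continuous in $p$, because $\bm{\gamma}_j$ is a global nonvanishing section and the orientation of $\nu$ is continuous. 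Hence it defines an orientation of $\eta_j$.

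As an independent check, $w_1(\eta_j)=w_1(\nu|_{H_j})-w_1(\underline{\mathbb{R}})=0$. This also follows more simply from $H_j\cong\mathbb{CP}^2$ being simply connected, which forces every real vector bundle over it to be orientable. With $\eta_j$ smooth, rank two, oriented, and satisfying \eqref{eq:nu}, the corollary follows.
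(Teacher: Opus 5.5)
Your proposal is correct and follows essentially the same route as the paper. The nowhere-zero section $\bm{\gamma}_j$ spans a trivial line subbundle of $\nu|_{H_j}$, $\eta_j$ is its orthogonal complement with respect to the Euclidean metric inherited from $\mathbb{E}_0$, and $\eta_j$ is oriented so that $\bm{\gamma}_j$ followed by a positive basis of $\eta_j$ is positive in $\nu$. You also spell out details the paper leaves implicit: smoothness via the constant-rank orthogonal projection, and the observation that orientability is automatic because $\mathbb{CP}^2$ is simply connected.
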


\begin{proof}
The nowhere-zero section $\bm{\gamma}_j$ globally spans a trivial oriented real line subbundle within $\nu|_{H_j}$. Taking the orthogonal complement of this line bundle with respect to the standard Euclidean metric on $\mathbb{E}_0$ yields a real rank-two complementary bundle $\eta_j$. Because both the total bundle $\nu|_{H_j}$ and the trivial line subbundle are oriented, their direct sum relationship canonically induces an orientation on $\eta_j$.
\end{proof}

We now compute the necessary characteristic classes. Following the cohomology ring structure detailed in Section~\ref{sec:notation}, $H^4(\mathbb{CP}^3;\mathbb{Z})$ is the cohomology group containing the first Pontryagin class of the tangent bundle. Recall the generator $h = c_1(\mathcal{O}_{\mathbb{CP}^3}(1))$ and the truncation relation $h^4=0$ defined in \eqref{eq:cohomology-ring}. Here, $\cO_{\mathbb{CP}^3}(1)$ represents the hyperplane line bundle, and $c_k$ denotes the $k$-th Chern class.  The Euler sequence for $\mathbb{CP}^3$ \cite{GriffithsHarris}  is the following short exact sequence of holomorphic vector bundles:
\[
    0\longrightarrow\cO_{\mathbb{CP}^3}
    \longrightarrow
    \cO_{\mathbb{CP}^3}(1)^{\oplus 4}
    \longrightarrow
    T^{1,0}\mathbb{CP}^3
    \longrightarrow 0,
\]
where $\cO_{\mathbb{CP}^3}$ is the trivial holomorphic line bundle, $\cO_{\mathbb{CP}^3}(1)^{\oplus 4}$ is the direct sum of four copies of $\cO_{\mathbb{CP}^3}(1)$, and $T^{1,0}\mathbb{CP}^3$ is the holomorphic tangent bundle of complex rank $3$.
For any short exact sequence of complex vector bundles $0\to E'\to E\to E''\to 0$, the Whitney sum formula \cite{MilnorStasheff} dictates that $c(E)=c(E')c(E'')$. Applying this to the Euler sequence, and utilizing $c(\cO_{\mathbb{CP}^3})=1$ along with $c(\cO_{\mathbb{CP}^3}(1))=1+h$, we obtain
\[
    c\bigl(T^{1,0}\mathbb{CP}^3\bigr)
    =
    \frac{
    c\bigl(\cO_{\mathbb{CP}^3}(1)^{\oplus 4}\bigr)}
    {c(\cO_{\mathbb{CP}^3})}
    =(1+h)^4.
\]
Since $h^4=0$, expanding this polynomial yields
\[
    c\bigl(T^{1,0}\mathbb{CP}^3\bigr)
    =
    1+4h+6h^2+4h^3.
\]
Consequently, the individual Chern classes are given by
\[
    c_1\bigl(T^{1,0}\mathbb{CP}^3\bigr)=4h,
    \qquad
    c_2\bigl(T^{1,0}\mathbb{CP}^3\bigr)=6h^2.
\]
For a real vector bundle $\xi$, its first Pontryagin class $p_1(\xi)$ resides in $H^4(X;\mathbb{Z})$. If $E$ is a complex vector bundle and $E_{\mathbb{R}}$ denotes its underlying real vector bundle by forgetting the complex structure, we have the standard relation
\[
    p_1(E_{\mathbb{R}})=c_1(E)^2-2c_2(E).
\]
This formula stems from the isomorphism $(E_{\mathbb{R}})_{\mathbb{C}} \cong E\oplus\overline{E}$ and the definition $p_1(E_{\mathbb{R}}) = -c_2\bigl((E_{\mathbb{R}})_{\mathbb{C}}\bigr)$, where $\overline{E}$ is the complex conjugate bundle. Since the real tangent bundle satisfies $T\mathbb{CP}^3 \cong \bigl(T^{1,0}\mathbb{CP}^3\bigr)_{\mathbb{R}}$, we compute
\[
    \begin{aligned}
    p_1(T\mathbb{CP}^3)
    &=
    c_1\bigl(T^{1,0}\mathbb{CP}^3\bigr)^2
    -
    2c_2\bigl(T^{1,0}\mathbb{CP}^3\bigr)\\
    &=(4h)^2-2(6h^2)
    =4h^2.
    \end{aligned}
\]
For the normal bundle $\nu$ of the embedding $G$, we have the stable relation
\[
    T\mathbb{CP}^3\oplus\nu\cong\underline{\mathbb{R}}^{\,9}.
\]
Generally, the Whitney sum formula for Pontryagin classes only holds modulo $2$-torsion. However, since the cohomology group $H^4(\mathbb{CP}^3;\mathbb{Z}) \cong \mathbb{Z}$ is torsion-free, the formula holds exactly over the integers. This gives
\[
    0=p_1(\underline{\mathbb{R}}^{\,9})
    =p_1(T\mathbb{CP}^3)+p_1(\nu).
\]
Substituting $p_1(T\mathbb{CP}^3)=4h^2$ yields
\begin{equation} \label{eq:4hsq}
    p_1(\nu)=-4h^2
    \qquad\text{in }H^4(\mathbb{CP}^3;\mathbb{Z}).
\end{equation}
Choosing a nonzero $\bm{b}_j$ as in \eqref{eq:hyperplane}, let $\iota_j:H_j\hookrightarrow\mathbb{CP}^3$ be the inclusion map, and define the pullback $\bar{h}=\iota_j^*h\in H^2(H_j;\mathbb{Z})$. Since $H_j\cong\mathbb{CP}^2$, the class $\bar{h}$ generates $H^2(H_j;\mathbb{Z})\cong\mathbb{Z}$, and $\bar{h}^2$ generates $H^4(H_j;\mathbb{Z})\cong\mathbb{Z}$. Restricting \eqref{eq:4hsq} to $H_j$ gives
\[
    p_1(\nu|_{H_j})=-4\bar{h}^2.
\]
On the other hand, Corollary~\ref{cor:normal-splitting} establishes the splitting
\[
    \nu|_{H_j}\cong\underline{\mathbb{R}}\oplus\eta_j.
\]
Because the first Pontryagin class is additive under direct sums
 %(again, strictly over $\mathbb{Z}$ since $H^4(H_j; \mathbb{Z})$ is torsion-free) 
 and the trivial line bundle has vanishing characteristic classes, we have
\[
    p_1(\nu|_{H_j})=p_1(\eta_j).
\]
For an oriented real rank-two vector bundle $\eta$, its structure group reduces to $\mathrm{SO}(2)$, and its first Pontryagin class equals the square of its Euler class $e(\eta)\in H^2(H_j;\mathbb{Z})$:
\[
    p_1(\eta)=e(\eta)^2,
\]
see \cite[Chapter~15]{MilnorStasheff}.
Since $H^2(H_j;\mathbb{Z})$ is generated by $\bar{h}$, there exists an integer $k\in\mathbb{Z}$ such that
\[
    e(\eta_j)=k\bar{h}.
\]
Consequently,
\[
    p_1(\nu|_{H_j})
    =
    p_1(\eta_j)
    =
    e(\eta_j)^2
    =
    k^2\bar{h}^2.
\]
Equating the two expressions for $p_1(\nu|_{H_j})$ and noting that $\bar{h}^2$ is a generator of $H^4(H_j;\mathbb{Z})$ yields the algebraic relation
\[
    k^2=-4.
\]
This is clearly impossible for any real integer $k\in\mathbb{Z}$. This  contradiction proves that no family of exactly ten vectors can possess the phase retrieval property in $\mathbb{C}^4$. This completes the proof.

\section{Discussion}\label{sec:discussion}
In this paper, we have established that no family of ten measurement vectors can perform phase retrieval in $\mathbb{C}^4$. The proof proceeds by constructing a smooth projective embedding
\[
    \mathbb{CP}^3\hookrightarrow\mathbb{R}^9.
\]
By exploiting the special zero loci of the rank-one intensity coordinates, we force a Whitney sum splitting of the rank-three normal bundle over a canonically embedded copy of $\mathbb{CP}^2$. The stable tangent-normal relation rigidly fixes the restricted first Pontryagin class to be $-4\bar{h}^2$, whereas the geometric splitting forces it to be a perfect square $k^2\bar{h}^2$. This topological obstruction leads directly to a contradiction. Combined with Vinzant's explicit $11$-vector construction, this conclusively proves that the exact minimum number of measurements required for phase retrieval in $\mathbb{C}^4$ is $11$. 

Our main theorem addresses the minimal measurement number problem in $\mathbb{C}^4$. For higher dimensions $M\ge 6$ that do not take the form $M = 2^k + 1$, $k \in \mathbb{N}_0$, determining the exact minimal measurement number remains a challenging open problem. We anticipate that extending our topological framework to these higher-dimensional spaces will serve as an exciting and promising direction for future research.

\section*{Acknowledgments}
M. Huang thanks Zhenxiao Xie for his careful reading of the manuscript and valuable feedback. This work was supported in part by the Beijing Natural Science Foundation (Grant No. 1262013), the National Key Research and Development Program of China (Grant No. 2025YFA1016902), and the National Natural Science Foundation of China (Grant No. 12201022).


\begin{thebibliography}{99}

\bibitem{BCE06}
R. Balan, P. G. Casazza, D. Edidin. On signal reconstruction without phase. \emph{Appl. Comput. Harmon. Anal.}, 2006, 20(3): 345--356.

\bibitem{BCMN14}
A. S. Bandeira, J. Cahill, D. G. Mixon, A. A. Nelson. Saving phase: Injectivity and stability for phase retrieval. \emph{Appl. Comput. Harmon. Anal.}, 2014, 37(1): 106--125.

\bibitem{BH1}
B. G. Bodmann, N. Hammen. Stable phase retrieval with low-redundancy frames. \emph{Adv. Comput. Math.}, 2015, 41(2): 317--331.

\bibitem{BH2}
B. G. Bodmann, N. Hammen. Algorithms and error bounds for noisy phase retrieval with low-redundancy frames. \emph{Appl. Comput. Harmon. Anal.}, 2017, 43(3): 482--503.

\bibitem{BCNT}
S. Botelho-Andrade, P. G. Casazza, H. Van Nguyen, J. C. Tremain. Phase retrieval versus phaseless reconstruction. \emph{J. Math. Anal. Appl.}, 2016, 436(1): 131--137.

\bibitem{Tcai}
T. Cai, A. Zhang. ROP: Matrix recovery via rank-one projections. \emph{Ann. Stat.}, 2015, 43(1): 102--138.

\bibitem{TTCai}
T. T. Cai, X. Li, Z. Ma. Optimal rates of convergence for noisy sparse phase retrieval via thresholded Wirtinger flow. \emph{Ann. Stat.}, 2016, 44(5): 2221--2251.

\bibitem{cai2021solving}
J. F. Cai, M. Huang, D. Li, Y. Wang. Solving phase retrieval with random initial guess is nearly as good as by spectral initialization. \emph{Appl. Comput. Harmon. Anal.}, 2022, 58: 60--84.

\bibitem{Phaseliftn}
E. J. Cand\`es, X. Li. Solving quadratic equations via PhaseLift when there are about as many equations as unknowns. \emph{Found. Comput. Math.}, 2014, 14(5): 1017--1026.

\bibitem{WF}
E. J. Cand\`es, X. Li, M. Soltanolkotabi. Phase retrieval via Wirtinger flow: Theory and algorithms. \emph{IEEE Trans. Inf. Theory}, 2015, 61(4): 1985--2007.

\bibitem{phaselift}
E. J. Cand\`es, T. Strohmer, V. Voroninski. Phaselift: Exact and stable signal recovery from magnitude measurements via convex programming. \emph{Commun. Pure Appl. Math.}, 2013, 66(8): 1241--1274.



\bibitem{Carmeli16}
C. Carmeli, T. Heinosaari, M.  Kech,  J. Schultz, A. Toigo. Stable pure state quantum tomography from five orthonormal bases. \emph{Europhysics Letters}, 2016, 115(3): 30001.


\bibitem{Carmeli15}
C. Carmeli, T. Heinosaari, J. Schultz, A. Toigo. How many orthonormal bases are needed to distinguish all pure quantum states? \emph{Eur. Phys. J. D}, 2015, 69: 179.



\bibitem{chai2010array}
A. Chai, M. Moscoso, G. Papanicolaou. Array imaging using intensity-only measurements. \emph{Inverse Probl.}, 2011, 27(1): 015005.

\bibitem{ChenFan}
Y. Chen, Y. Chi, J. Fan, C. Ma. Gradient descent with random initialization: Fast global convergence for nonconvex phase retrieval. \emph{Math. Program.}, 2019, 176: 5--37.

\bibitem{CEHV15}
A. Conca, D. Edidin, M. Hering, C. Vinzant. An algebraic characterization of injectivity in phase retrieval. \emph{Appl. Comput. Harmon. Anal.}, 2015, 38(2): 346--356.

\bibitem{fienup1987phase}
J. C. Dainty, J. R. Fienup. Phase retrieval and image reconstruction for astronomy. \emph{Image Recovery: Theory and Appl.}, 1987, 231: 275.

\bibitem{Duchi}
J. C. Duchi, F. Ruan. Solving (most) of a set of quadratic equalities: Composite optimization for robust phase retrieval. \emph{Inf. Inference}, 2019, 8(3): 471--529.

\bibitem{F04}
J. Finkelstein. Pure-state informationally complete and `really' complete measurements. \emph{Phys. Rev. A}, 2004, 70(5): 052107.

\bibitem{FSC05}
S. T. Flammia, A. Silberfarb, C. M. Caves. Minimal informationally complete measurements for pure states. \emph{Found. Phys.}, 2005, 35(12): 1985--2006.


\bibitem{GriffithsHarris}
P. Griffiths, J. Harris. Principles of Algebraic Geometry. \emph{John Wiley and Sons}, 2014.

\bibitem{GoyenecheEtAl2015}
D. Goyeneche, G. Ca\~nas, S. Etcheverry, E. S. G\'omez, G. B. Xavier, G. Lima, A. Delgado. Five measurement bases determine pure quantum states on any dimension. \emph{Phys. Rev. Lett.}, 2015, 115(9): 090401.

\bibitem{GKR}
P. Grohs, S. Koppensteiner, M. Rathmair. Phase Retrieval: Uniqueness and Stability. \emph{SIAM Rev.}, 2020, 62(2): 301--350.

\bibitem{harrison1993phase}
R. W. Harrison. Phase problem in crystallography. \emph{J. Opt. Soc. Am. A}, 1993, 10(5): 1046--1055.

\bibitem{Heinosaari13}
T. Heinosaari, L. Mazzarella, M. M. Wolf. Quantum tomography under prior information. \emph{Commun. Math. Phys.}, 2013, 318(2): 355--374; corrected arXiv version: arXiv:1109.5478v3 (2018).

\bibitem{huangTIT}
M. Huang. Near-Quadratic Convergence of the Gauss--Newton Method for Complex Phase Retrieval. \emph{IEEE Trans. Inf. Theory}, 2026, 72(1): 222--245.

\bibitem{huangsiam}
M. Huang, Y. Wang. Linear convergence of randomized Kaczmarz method for solving complex-valued phaseless equations. \emph{SIAM J. Imaging Sci.}, 2022, 15(2): 989--1016.

\bibitem{Lee09}
J. M. Lee. Manifolds and Differential Geometry. \emph{Grad. Stud. Math.}, 2009, 107: American Mathematical Society.

\bibitem{Lee12}
J. M. Lee. Introduction to Smooth Manifolds (2nd ed.). \emph{Graduate Texts in Mathematics, Springer, New York}, 2013, 218.

\bibitem{Li2026}
Z. Li. On injectivity of phase retrieval. \emph{arXiv preprint}, 2026, arXiv:2606.17922.

\bibitem{miao2008extending}
J. Miao, T. Ishikawa, Q. Shen, T. Earnest. Extending X-ray crystallography to allow the imaging of noncrystalline materials, cells, and single protein complexes. \emph{Annu. Rev. Phys. Chem.}, 2008, 59: 387--410.

\bibitem{Milgram}
R. J. Milgram. Immersing projective spaces. \emph{Ann. Math.}, 1967, 85(3): 473--482.

\bibitem{millane1990phase}
R. P. Millane. Phase retrieval in crystallography and optics. \emph{J. Opt. Soc. Am. A}, 1990, 7(3): 394--411.

\bibitem{MilnorStasheff}
J. W. Milnor, J. D. Stasheff. Characteristic Classes. \emph{Ann. Math. Stud.}, 1974, 76: Princeton University Press.

\bibitem{shechtman2015phase}
Y. Shechtman, Y. C. Eldar, O. Cohen, H. N. Chapman, J. Miao, M. Segev. Phase retrieval with application to optical imaging: a contemporary overview. \emph{IEEE Signal Process. Mag.}, 2015, 32(3): 87--109.



\bibitem{SunLL}
L. L. Sun, S. Yu,  Z.-B. Chen.  Minimal determination of a pure qutrit state and four-measurement protocol for pure qudit state.  \emph{J. Phys. A Math. Theor.}, 2020, 53, 075305.

\bibitem{turstregion}
J. Sun, Q. Qu, J. Wright. A geometric analysis of phase retrieval. \emph{Found. Comput. Math.}, 2018, 18(5): 1131--1198.

\bibitem{tan2019phase}
Y. S. Tan, R. Vershynin. Phase retrieval via randomized kaczmarz: theoretical guarantees. \emph{Inf. Inference}, 2019, 8(1): 97--123.

\bibitem{Vinzant15}
C. Vinzant. A small frame and a certificate of its injectivity. \emph{Proc. SampTA}, 2015, 197--200.

\bibitem{Waldspurger2015}
I. Waldspurger, A. d'Aspremont, S. Mallat. Phase recovery, maxcut and complex semidefinite programming. \emph{Math. Program.}, 2015, 149(1-2): 47--81.

\bibitem{waldspurger2018phase}
I. Waldspurger. Phase retrieval with random gaussian sensing vectors by alternating projections. \emph{IEEE Trans. Inf. Theory}, 2018, 64(5): 3301--3312.

\bibitem{walther1963question}
A. Walther. The question of phase retrieval in optics. \emph{J. Mod. Opt.}, 1963, 10(1): 41--49.

\bibitem{WangShang2018}
Y. Wang, Y. Shang. Pure state `really' informationally complete with rank-1 POVM. \emph{Quantum Inf. Process.}, 2018, 17(3): 51.

\bibitem{wangxu}
Y. Wang, Z. Xu. Generalized phase retrieval: measurement number, matrix recovery and beyond. \emph{Appl. Comput. Harmon. Anal.}, 2019, 47(2): 423--446.

\end{thebibliography}
\end{document}